\tikzset{>=stealth}
\tikzstyle{node} = [circle, minimum size = 1.4mm, inner sep = 0mm, fill]
\tikzstyle{hyperedge} = [rectangle, minimum width = 5mm, minimum height = 5mm, draw, inner sep = 0mm]
\tikzstyle{HG} = [align = center]
\tikzstyle{circledge} = [circle, minimum size = 7mm, inner sep = 0mm, color=black, draw]
\newcommand{\eqdef}{\mathrel{\mathop:}=}
\newcommand{\SG}{\mathop{\mathrm{sg}}}
\newcommand{\LC}{\mathrm{L}}
\newcommand{\rk}{\mathit{rk}}
\newcommand{\HL}{\mathrm{HL}}
\newcommand{\DPO}{\mathrm{DPO}}
\begin{document}
	\title{Expressive Power of Hypergraph Lambek Grammars}
	\author{Tikhon Pshenitsyn \orcidID{0000-0003-4779-3143}\\ ptihon@yandex.ru}
	\authorrunning{T. Pshenitsyn}
	\institute{Steklov Mathematical Institute of Russian Academy of Sciences}
	\maketitle 

\begin{abstract}
	Hypergraph Lambek grammars (HL-grammars) is a novel logical approach to generating graph languages based on the hypergraph Lambek calculus. In this paper, we establish a precise relation between HL-grammars and hypergraph grammars based on the double pushout (DPO) approach: we prove that HL-grammars generate the same class of languages as DPO grammars with the linear restriction on lengths of derivations. This can be viewed as a complete description of the expressive power of HL-grammars and also as an analogue of the Pentus theorem, which states that Lambek grammars generate the same class of languages as context-free grammars. As a corollary, we prove that HL-grammars subsume contextual hyperedge replacement grammars.
\end{abstract}

\section*{Funding}

This work was supported by the Russian Science Foundation under grant no. 23-11-00104, https://rscf.ru/en/project/23-11-00104/.

\section{Introduction}\label{sec_introduction}
The Lambek calculus $\LC$ is a logic introduced by Joachim Lambek in \cite{Lambek58} to model the syntax of natural languages. The idea behind it is based on the concept of syntactic categories: certain groups of words (and of word collocations) behave similarly at the syntactic level and are interchangeable in contexts without affecting grammatical correctness; e.g. the words \textit{Arthur} and \textit{Gabrielle} along with the expression \textit{the girl from Ipanema} are of the same syntactic category (which we call \textit{noun phrase}). Lambek calculus formalizes syntactic categories as follows: one assigns a formula of $\LC$ to each word and also selects a distinguished formula $S$ to denote the syntactic category of sentences; then a sequence of words $a_1\dotsc a_n$ is accepted iff, after replacing each word $a_i$ by a formula (\textit{category}) $T_i$ assigned to it, the sequent $T_1,\dotsc,T_n \to S$ is derivable in $\LC$. This sequent can be informally understood as the following statement: ``a sequence of words of categories $T_1,\dotsc,T_n$ forms a string of category $S$, i.e. a sentence''. Formal grammars using such a mechanism for generating languages are called \textit{categorial grammars}.

The language of the Lambek calculus includes three binary connectives, which are divisions $\backslash$, $/$, and the product $\cdot$. Intuitively, a category of the form $A/B$ is assigned to a word if, whenever an expression of the type $B$ appears to the \textit{right} of it, they together form an expression of the category $A$; similarly for the category $B \backslash A$ (change \textit{right} to \textit{left}). For example, if $s$ is the category representing sentences, $n$ represents common nouns (e.g. \textit{girl}), and $np$ represents noun phrases denoting specific objects or subjects (e.g. \textit{Arthur}), then the article \textit{the} must have the category $np/n$ since it receives a common noun from the right and forms a noun phrase, and transitive verbs like \textit{loves} must have the category $(np\backslash s)/np$: they receive a noun phrase from the right and then another one from the left in order to form a sentence. In the Lambek calculus, one can derive the sequent $np,(np\backslash s)/np,np/n,n \to s$ as follows (see details in Section \ref{ssec_HL_HLG}):
\begin{equation}\label{eq_Altg}
	\infer[{\scriptstyle (/L) }]
	{
		np,(np\backslash s)/np, np/n,n \to s
	}
	{
		\infer[{\scriptstyle (\backslash L) }]{
			np,np\backslash s \to s
		}{
			s \to s
			&
			np \to np
		}
		&
		\infer[{\scriptstyle (/ L) }]{
			np/n,n \to np
		}{
			np \to np & n \to n
		}
	}
\end{equation}
This justifies the correctness of the sentence \textit{Arthur loves the girl}. 

In \cite{Pshenitsyn22_2}, we introduced the hypergraph Lambek calculus $\HL$, which generalizes the categorial paradigm to the field of graph grammars. The idea is the same: one assigns formulas of $\HL$ to labels of hyperedges of a hypergraph and then checks whether a corresponding \textit{hypergraph sequent} can be derived in $\HL$. This gives rise to hypergraph Lambek grammars ($\HL$-grammars). They are NP-complete and they subsume hyperedge replacement grammars \cite{Pshenitsyn22_2}, which are considered to be a hypergraph extension of context-free grammars. 

In \cite{Pentus93}, Mati Pentus proved that grammars over $\LC$ generate the same languages as context-free grammars. The proof of this result uses quite delicate techniques, which are hard to adapt for variants of the Lambek calculus. In the hypergraph case, moreover, $\HL$-grammars and hyperedge replacement grammars are not equivalent: the former are strictly more expressive than the latter. In \cite{Pshenitsyn22_2}, we proved that $\HL$-grammars generate finite intersections of languages generated by hyperedge replacement grammars. 

Continuing the attempts to find a bridge between $\HL$-grammars and existing graph grammars, we looked at the hypergraph grammars based on the double-pushout approach (\emph{DPO grammars}) \cite{EhrigPS73}. In \cite{Pshenitsyn22_1}, we proved that $\HL$-grammars are at least as powerful as DPO grammars \textit{with linear restriction on lengths of derivations}, i.e. such that one takes into account only derivations with the number of steps bounded by a linear function w.r.t. the number of hyperedges of the resulting hypergraph (see more in Section \ref{sec_linDPO}). This was done using methods similar to those used by Kanazawa to transform an unrestricted Chomsky grammar into a grammar over the multiplicative-exponential Lambek calculus \cite{Kanazawa99}.

In this paper, we establish that hypergraph Lambek grammars are equivalent to the DPO grammars with the linear restriction (Section \ref{sec_linDPO}). Since the latter represent a straightforward and natural modification of DPO grammars, this result may be viewed as an analogue of the Pentus theorem. It should be noticed that linearly restricted DPO grammars have not been studied in the existing literature (although a similar notion is defined for string grammars in \cite{Book71}), so little is known about their possibilities and limitations. This result has many nice corollaries; due to page limits we will discuss only the following one in Section \ref{ssec_rem_and_cor}: $\HL$-grammars generate all languages generated by contextual hyperedge replacement grammars. In Section \ref{sec_conclusion}, we give concluding remarks.

\section{Preliminaries}\label{sec_preliminaries}
$\mathcal{T}^\ast$ is the set of strings over an alphabet $\mathcal{T}$ including the empty word $\Lambda$; if $R$ is a relation, then $R^\ast$ is its transitive reflexive closure. $A\sqcup B$ denotes the disjoint union of the sets $A$ and $B$. Each function $f:\mathcal{T} \to \Delta$ can be extended to a homomorphism $f:\mathcal{T}^\ast \to \Delta^\ast$. By $w(i)$ we denote the $i$-th symbol of $w \in \mathcal{T}^\ast$; $|w|$ is the number of symbols in $w$; $[n]$ denotes the set $\{1,2,\dotsc,n\}$ (and $[0] \eqdef \emptyset$).

A \emph{ranked set} $M$ is the set along with a \emph{rank function} $\rk:M \to \mathbb{N}$. Given a ranked set of \emph{labels} $\mathcal{T}$, a \emph{hypergraph} $G$ over $\mathcal{T}$ is a tuple $\langle V_G, E_G, att_G, lab_G, ext_G \rangle$ where $V_G$ is a finite set of \emph{nodes}, $E_G$ is a finite set of \emph{hyperedges}, $att_G: E_G\to V_G^\ast$ assigns a string (i.e. an ordered multiset) of \emph{attachment nodes} to each hyperedge, $lab_G: E_G \to \mathcal{T}$ labels each hyperedge by some element of $\mathcal{T}$ in such a way that $\rk(lab_G(e))=|att_G(e)|$ whenever $e\in E_G$, and $ext_G\in V_G^\ast$ is a string of \emph{external nodes}.  Hypergraphs are always considered up to isomorphism. The set of all hypergraphs with labels from $\mathcal{T}$ is denoted by $\mathcal{H}(\mathcal{T})$. The \emph{rank function} $\rk_G$ (or $\rk$, if $G$ is clear) is defined as follows: $\rk_G(e)\eqdef|att_G(e)|$. Besides, $\rk(G)\eqdef |ext_G|$.

In drawings of hypergraphs, small circles are nodes, labeled rectangles are hyperedges, $att_G$ is represented by numbered lines, and external nodes are represented by numbers in parentheses (round brackets). If a hyperedge has exactly two attachment nodes, then it is depicted by a labeled arrow that goes from the first attachment node to the second one.

A \emph{handle} $a^\bullet$ is a hypergraph $\langle [n],[1],att,lab,1\dotsc n\rangle$ where $att(1)=1\dotsc n$ and $lab(1)=a$ ($a\in\mathcal{T}$, $rk(a)=n$). Hypergraphs without hyperedges are called \emph{edgeless}. \emph{A string graph $\SG(w)$ induced by a string $w=a_1\dots a_n$} is the hypergraph $\langle \{v_i\}_{i=0}^n,\{e_i\}_{i=1}^n,att,lab,v_0v_n \rangle$ where $att(e_i)=v_{i-1}v_i$, $lab(e_i)=a_i$.

Given a hypergraph $H$ and a function $f:E_H\to \mathcal{T}$, \emph{a relabeling $f(H)$} is the hypergraph $f(H)=\langle V_H, E_H, att_H, f, ext_H\rangle$. It is required that $rk_H(e)=rk(f(e))$ for $e\in E_H$. 

The \emph{replacement of a hyperedge $e_0$ in $G$ ($e_0 \in E_G$) by a hypergraph $H$} (such that $rk(e_0)=rk(H)$) is done as follows: (1) remove $e_0$ from $G$; (2) insert an isomorphic copy of $H$ ($H$ and $G$ must consist of disjoint sets of nodes and hyperedges); (3) for each $i=1,\dotsc,rk(e_0)$, fuse the $i$-th external node of $H$ with the $i$-th attachment node of $e_0$ (formally, the set of new nodes is $(V_G \sqcup V_H)/\equiv$ where $\equiv$ is the smallest equivalence relation satisfying $att_G(e_0)(i)\equiv ext_H(i)$). The result is denoted as $G[e_0/H]$. If several hyperedges of a hypergraph are replaced by other hypergraphs, then the result does not depend on the order of the replacements; moreover the result does not change, if replacements are done simultaneously \cite{DrewesKH97}. The following notation is in use: if $e_1,\dots,e_k$ are distinct hyperedges of a hypergraph $H$ and they are simultaneously replaced by hypergraphs $H_1,\dots,H_k$ resp., then the result is denoted $H[e_1/H_1,\dots,e_k/H_k]$.

The \emph{disjoint union} $H_1+H_2$ of the hypergraphs $H_1$, $H_2$ as the hypergraph $\langle V_{H_1}\sqcup V_{H_2}, E_{H_1}\sqcup E_{H_2}, att, lab, ext_{H_1}ext_{H_2}\rangle$ such that $att|_{H_i} = att_{H_i}$, $lab|_{H_i} = lab_{H_i}$ ($i=1,2$); that is, we just put these hypergraphs together without fusing any nodes or hyperedges. The disjoint union of a hypergraph $H$ (such that $\rk(H)=0$) with itself $k$ times is denoted by $k \cdot H$. The multiple disjoint union $H_1 + \dotsc + H_k$ can be shortly denoted as $\sum_{i=1}^k H_i$. Note that in general $H+G \ne G+H$ but $H+G=G+H$ holds whenever $\rk(H)=0$.

\subsection{DPO Hypergraph Grammars}
The general definitions of the double pushout rewriting can be found in various textbooks and articles (see e.g. \cite{KonigNPR18}). To recall, a rule in the double pushout approach is of the form $L \overset{\varphi_L}{\leftarrow} I \overset{\varphi_R}{\rightarrow} R$ where $L,I,R$ are hypergraphs of rank 0 and $\varphi_L$, $\varphi_R$ are morphisms; then an application of this rule to a hypergraph $G$ consists of finding a subhypergraph matching $L$ within $G$ and then replacing $L$ by $R$. In contrast with hyperedge replacement, it is not clear how to replace a hypergraph by another one; this is defined using the categorical notion of double pushout. The interface hypergraph $I$ plays the role of ``the shared part'' of $L$ and $R$, and it specifies how exactly $L$ is replaced by $R$. 

In this paper, we are not interested in working on the level of generality category theory provides us, so we consider DPO grammars only within the category of hypergraphs. Moreover, we consider only rules where $I$ is discrete. This leads us to the following simple definition of a hypergraph DPO rule:

\begin{definition}
	A \emph{DPO rule over a set of labels $C$} is of the form $r = (L | R)$ where $L,R \in \mathcal{H}(C)$ are hypergraphs such that $\rk(L)=\rk(R)$. 
	\\
	A hypergraph $G$ \emph{is transformed into $H$ via $r=(L|R)$} if there is a hypergraph $C$ with a distinguished hyperedge $e_0$ such that $G=C[e_0/L]$ and $H=C[e_0/R]$. This is denoted as follows: $G \Rightarrow_r H$ or $G \Rightarrow H$. If $H$ is obtained from $G$ by applying $k$ rules from some fixed set, then we write $G \Rightarrow^k H$.
\end{definition}
\begin{example}\label{ex_dpo_rule}
	Consider the DPO rule $\rho$ and an example of its application:
	\begin{equation}\label{ex_application_dpo_rule}
		\rho = \left( \!
		{\color{orange}
			\vcenter{\hbox{{\tikz[baseline=.1ex]{
							\node[node, label=above:{\scriptsize $(1)$}] (N) {};
							\node[node, below left=4mm and 4mm of N, label=left:{\scriptsize $(2)$}] (N1) {};
							\node[node, below right=4mm and 4mm of N, label=right:{\scriptsize $(3)$}] (N2) {};
							\draw[->] (N) -- node[above left] {$l$} (N1);
							\draw[->] (N) -- node[above right] {$r$} (N2);
			}}}}
		}
		\middle|
		{\color{teal}
			\vcenter{\hbox{{\tikz[baseline=.1ex]{
							\node[node] (N) {};
							\node[node, below = 4mm of N, label=left:{\scriptsize $(2)$}] (N1) {};
							\node[node, right=11mm of N1, label=right:{\scriptsize $(3)$}] (N2) {};
							\node[node, right=11mm of N, label=right:{\scriptsize $(1)$}] (N3) {};
							\node[hyperedge, right = 2.5mm of N] (F) {$f$};
							\draw[->] (N1) -- node[below] {$t$} (N2);
							\draw[-] (N) -- node[above] {\scriptsize 1} (F);
			}}}}
		} \right)
		\quad
		\boxed{
		\!\!
		\vcenter{\hbox{{\tikz[baseline=.1ex]{
						\node[orange,node] (N) {};
						\node[orange,node, below left = 4mm and 4mm of N, label=left:{\scriptsize $(1)$}] (N1) {};
						\node[orange,node, below right = 4mm and 4mm of N] (N2) {};
						\draw[->] (N2) -- node[below] {$q$} (N1);
						\draw[orange,->] (N) to[bend right = 0] node[above] {$l$} (N1);
						\draw[orange,->] (N) to[bend left = 0] node[above] {$r$} (N2);
		}}}}
		\;\,\Rightarrow\;\,
		\vcenter{\hbox{{\tikz[baseline=.1ex]{
						\node[teal,node] (N) {};
						\node[teal,node, below right=4.7mm and 4mm of N] (N2) {};
						\node[teal,node, below left=4.7mm and 4mm of N, label=below left:{\color{black}\scriptsize $(1)$}] (N1) {};
						\node[teal,hyperedge,left=2.5mm of N] (E2) {$f$};
						\node[teal,node, left=2mm of E2] (N3) {};
						\draw[teal,-] (N3) -- node[below] {\scriptsize 1} (E2);
						\draw[teal,->] (N1) to[bend right = 50] node[below] {$t$} (N2);
						\draw[->] (N2) to[bend right = 0] node[above right] {$q$} (N1);
		}}}}
		}
	\end{equation}
\end{example}
\begin{definition}
	A \emph{DPO hypergraph grammar} $HGr$ is of the form $\langle \mathcal{N},\mathcal{T},\mathcal{P}, Z\rangle$ where $N$, $\mathcal{T}$ are disjoint finite alphabets of \emph{nonterminal} and \emph{terminal} labels resp., $\mathcal{P}$ is a finite set of hypergraph grammar rules over $\mathcal{N} \cup \mathcal{T}$ and $Z$ is the \emph{start hypergraph}. The language $L(HGr)$ generated by $HGr$ is the set of all hypergraphs $H \in \mathcal{H}(\mathcal{T})$ such that $Z \Rightarrow^\ast H$ using the rules from $\mathcal{P}$.
\end{definition}

\subsection{Hypergraph Lambek Calculus and Its Grammars}\label{ssec_HL_HLG}
Now let us define the hypergraph Lambek calculus $\HL$ and $\HL$-grammars (see discussion of the definitions in \cite{Pshenitsyn21_2, Pshenitsyn22_2}). Firstly we recall how the standard Lambek calculus is organized and then we proceed with formal definitions of $\HL$.

The Lambek calculus $\LC$ \cite{Lambek58, MootR12} is a propositional substructural logic. Its language includes two divisions $\backslash$ and $/$, which can be viewed as directed implications, and the product $\cdot$. The Lambek calculus works with sequents, which are structures of the form $A_1,\dotsc,A_n \to B$ where $A_i$, $B$ are categories. Usually, it is required that $n>0$, which is called Lambek's restriction (it arises for linguistic purposes). If $n\ge 0$ is allowed, then the resulting calculus is \emph{the Lambek calculus allowing empty premises} $\LC^\ast$ \cite{Pentus98}. The only axiom of $\LC$ (and of $\LC^\ast$) is $A \to A$; the rules are the following (for $\LC$, we require that $\Pi$, $\Psi$ are non-empty):
$$
\infer[{\scriptstyle ( \backslash L)}]{\Gamma, \Pi, A \backslash B, \Delta \to C}{\Gamma, B, \Delta \to C & \Pi \to A}
\quad
\infer[{\scriptstyle ( \backslash R)}]{\Pi \to A \backslash B}{A, \Pi \to B}
\quad
\infer[{\scriptstyle ( \cdot L)}]{\Gamma, A \cdot B, \Delta \to C}{\Gamma, A, B, \Delta \to C}
$$
$$
\infer[{\scriptstyle ( / L)}]{\Gamma, B / A, \Pi, \Delta \to C}{\Gamma, B, \Delta \to C & \Pi \to A}
\quad
\infer[{\scriptstyle ( / R)}]{\Pi \to B / A}{\Pi, A \to B}
\quad
\infer[{\scriptstyle ( \cdot R)}]{\Pi, \Psi \to A \cdot B}{\Pi \to A & \Psi \to B}
$$
\begin{example}\label{ex_der_L_2}
	The sequent $p/q \to (p\cdot r)/(q\cdot r)$ is derivable in $\LC$:
	\begin{equation}
		\infer[{\scriptstyle (/ R) }]
		{
			p/q \to (p\cdot r)/(q\cdot r)
		}
		{
			\infer[{\scriptstyle (\cdot L) }]{
				p/q, q\cdot r \to p\cdot r
			}{
				\infer[{\scriptstyle (/ L) }]{
					p/q, q, r \to p\cdot r
				}{
					\infer[{\scriptstyle (\cdot R) }]{
						p, r \to p\cdot r
					}{
						p \to p
						&
						r \to r
					}
					&
					q \to q
				}
			}
		}
	\end{equation}
\end{example}
A Lambek grammar consists of a finite alphabet $\mathcal{T}$, a finite binary relation $\triangleright$ between symbols of the alphabet and formulas (categories) of $\LC$, and of a distinguished formula $S$. A string $a_1\dotsc a_n \in \mathcal{T}^\ast$ belongs to the language of this grammar iff for each $i$ there exists a category $T_i$ such that $a_i \triangleright T_i$ and such that the sequent $T_1,\dotsc,T_n \to S$ is derivable in $\LC$. Returning to the example (\ref{eq_Altg}), there $\mathcal{T} = \{\mbox{\textit{Arthur}}, \mbox{\textit{loves}}, \mbox{\textit{the}}, \mbox{\textit{girl}}\}$, $\mbox{\textit{Arthur}} \triangleright np$, $\mbox{\textit{loves}} \triangleright (np \backslash s)/np$, $\mbox{\textit{the}} \triangleright np/n$, $\mbox{\textit{girl}} \triangleright n$, and $S=s$ is a primitive category.

In our works \cite{Pshenitsyn21_2,Pshenitsyn22_2} we generalized the Lambek calculus to hypergraphs trying to develop a logic-based graph grammar formalism; the motivation was mostly theoretical, i.e. we were interested in mathematical properties of the resulting formalism compared to other graph grammars and to the string Lambek calculus. The resulting generalization was designed by comparing context-free grammars, Lambek grammars and hyperedge replacement grammars. The hypergraph Lambek calculus has categories, sequents and rules that work in a similar way to those of $\LC$. In fact, $\LC$ can be embedded in $\HL$ if one interprets strings as string graphs. 

To define the set of formulas (which are called \textit{types}) of $\HL$, we fix a ranked set $\mathit{Pr}$ of \emph{primitive types} such that for each $k \in\mathbb{N}$ there are infinitely many $p \in \mathit{Pr}$ satisfying $\rk(p) = k$. Besides, we fix a countable set of labels $\$_n,n\in\mathbb{N}$ and set $\rk(\$_n)=n$; let us agree that these labels do not belong to any other set considered in the definition of the calculus. Then the set of \emph{types} $\mathit{Tp}$ is defined inductively as follows:
\begin{enumerate}
	\item All primitive types are types.
	\item Let $N \in \mathit{Tp}$ be a type, and let $D$ be a hypergraph such that labels of all its hyperedges, except for one, are from $\mathit{Tp}$, and the remaining one equals $\$_d$ for some $d$; let also $\rk(N)=\rk(D)$ \underline{and $|E_D|>1$}. Then $N\div D$ is also a type and $\rk(N\div D)\eqdef d$. The hyperedge of $D$ labeled by $\$_d$ is denoted by $e^\$_D$.
	\item If $M$ is a hypergraph labeled by types from $\mathit{Tp}$ \underline{such that $|E_M|>0$}, then $\times(M)$ is also a type, and $\rk(\times(M))\eqdef \rk(M)$.
\end{enumerate}
\begin{definition}
	A \emph{sequent} is a structure of the form $H\to A$ where $H \in \mathcal{H}(\mathit{Tp})$ is the \emph{antecedent} of the sequent \underline{such that $|E_H|>0$}, and $A \in \mathit{Tp}$ is the \emph{succedent} of the sequent such that $\rk(H)=\rk(A)$.
\end{definition} 

The hypergraph Lambek calculus $\HL$ derives sequents in the latter sense. The only axiom of $\HL$ is of the form $A^\bullet\to A$ where $A\in \mathit{Tp}$ ($A^\bullet$ is a handle). There are four inference rules of $\HL$:
		$$
		\infer[{\scriptstyle (\div L)}]{H\left[e/D[e^\$_D/ (N\div D)^\bullet,d_1/H_1,\dotsc,d_k/H_k]\right]\to A}{H[e/N^\bullet]\to A & H_1\to lab_D(d_1) &\dots & H_k\to lab_D(d_k)}
		$$
		$$
		\infer[{\scriptstyle(\times R)}]{M[m_1/H_1,\dotsc,m_l/H_l]\to\times(M)}{H_1\to lab_M(m_1) & \dots & H_l\to lab_M(m_l)}
		$$
		$$
		\infer[{\scriptstyle(\div R)}]{F\to N\div D}{D[e^\$_D/F]\to N}
		\qquad\qquad
		\infer[{\scriptstyle(\times L)}]{H[e/(\times(M))^\bullet]\to A}{H[e/M]\to A}
		$$
Here $N\div D$, $\times(M) \in \mathit{Tp}$; $e \in E_H$; $E_D=\{e^\$_D,d_1,\dots,d_k\}$, $E_M = \{m_1,\dotsc, m_l\}$. \underline{All antecedents of sequents must have at least one hyperedge.} 

A sequent $H\to A$ is said to be \emph{derivable in $\mathrm{HL}$} (denoted by $\mathrm{HL}\vdash H\to A$) if it can be obtained from axioms by applications of rules of $\mathrm{HL}$. 

\begin{definition}
	An \emph{$\HL$-grammar} is a tuple $HGr=\langle \mathcal{T}, S, \triangleright\rangle$ where $\mathcal{T}$ is a ranked alphabet, $S\in \mathit{Tp}$ is a distinguished type, and $\triangleright\subseteq\mathcal{T}\times \mathit{Tp}$ is a finite binary relation such that $a\triangleright T$ implies $\rk(a)=\rk(T)$. \emph{The language $L(HGr)$ generated by an $\HL$-grammar} $HGr=\langle \mathcal{T}, S, \triangleright\rangle$ contains a hypergraph $G\in\mathcal{H}(\mathcal{T})$ if and only if \underline{$|E_G|>0$} and a function $f_G:E_G\to \mathit{Tp}$ exists such that: 
	\begin{enumerate}
		\item $lab_G(e)\triangleright f_G(e)$ whenever $e\in E_G$;
		\item $\mathrm{HL}\vdash f_G(G)\to S$ (recall that $f_G(G)$ is a relabeling of $G$ by means of $f_G$).
	\end{enumerate}
\end{definition}

\begin{remark}\label{rem_empty_antecedents}
	The underlined parts of the above definitions represent Lambek's restriction in the hypergraph case: we forbid edgeless hypergraphs and we also require that for $N \div D \in \mathit{Tp}$ the denominator $D$ must have more hyperedges than just $e^\$_D$. In \cite{Pshenitsyn21_2,Pshenitsyn22_2}, there is another restriction instead of this: coincidences of external nodes of a hypergraph are forbidden. However, it seems to be less natural and less related to the original restriction, which requires that any antecedent contains at least one type. Moreover, allowing repeated external nodes is necessary to prove the main result of this work (Theorem \ref{th_main}, Example \ref{ex_translation_HL_to_DPO}). Thus, from now on, the official definition of $\HL$ is as in this paper. We can also define \emph{the hypergraph Lambek calculus with edgeless premises} $\HL^\ast$ (along with $\HL^\ast$-grammars) by removing all the underlined parts of the above definition.	
\end{remark}

\begin{example}
	Consider the following types ($s,n,np \in \mathit{Pr}$ are types of rank $2$):
	\begin{enumerate}
		\item $U^\prime_1=\tau(np\backslash s) = s \div 
		\left(\vcenter{\hbox{{\tikz[baseline=.1ex]{
						\node[node,label=left:{\scriptsize $(1)$}] (N1) {};
						\node[node,right=7mm of N1] (N2) {};
						\node[node, right=7mm of N2,label=right:{\scriptsize $(2)$}] (N3) {};
						\draw[->] (N1) -- node[above] {$np$} (N2);
						\draw[->] (N2) -- node[above] {$\$_2$} (N3);
		}}}}\right) = s \div \SG(\mathit{np} \, \$_2)$
		\item $U_1=\tau((np\backslash s)/np) = U_1^\prime \div 
		\left(\vcenter{\hbox{{\tikz[baseline=.1ex]{
						\node[node,label=left:{\scriptsize $(1)$}] (N1) {};
						\node[node,right=6mm of N1] (N2) {};
						\node[node, right=6mm of N2,label=right:{\scriptsize $(2)$}] (N3) {};
						\draw[->] (N1) -- node[above] {$\$_2$} (N2);
						\draw[->] (N2) -- node[above] {$np$} (N3);
		}}}}\right)$
		\item $U_2=\tau(np/n) = np \div 
		\SG(\$_2 \, n)$
	\end{enumerate}
	The sequent $\vcenter{\hbox{{\tikz[baseline=.1ex]{
					\node[node,label=left:{\scriptsize $(1)$}] (N1) {};
					\node[node,right=7mm of N1] (N2) {};
					\node[node,right=7mm of N2] (N3) {};
					\node[node,right=7mm of N3] (N4) {};
					\node[node, right=7mm of N4,label=right:{\scriptsize $(2)$}] (N5) {};
					\draw[->] (N1) -- node[above] {$np$} (N2);
					\draw[->] (N2) -- node[above] {$U_1$} (N3);
					\draw[->] (N3) -- node[above] {$U_2$} (N4);
					\draw[->] (N4) -- node[above] {$n$} (N5);
	}}}}
	\;\to\;
	s $ is derivable in $\HL$:
	\begin{equation}
		\infer[{\scriptstyle (\div L) }]{
			\vcenter{\hbox{{\tikz[baseline=.1ex]{
							\node[node,label=left:{\scriptsize $(1)$}] (N1) {};
							\node[node,right=6mm of N1] (N2) {};
							\node[node,right=6mm of N2] (N3) {};
							\node[node,right=6mm of N3] (N4) {};
							\node[node, right=6mm of N4,label=right:{\scriptsize $(2)$}] (N5) {};
							\draw[->] (N1) -- node[above] {$np$} (N2);
							\draw[->] (N2) -- node[above] {$U_1$} (N3);
							\draw[->] (N3) -- node[above] {$U_2$} (N4);
							\draw[->] (N4) -- node[above] {$n$} (N5);
			}}}}
			\to
			s 
		}{
			\infer[{\scriptstyle (\div L) }]{
				\vcenter{\hbox{{\tikz[baseline=.1ex]{
								\node[node,label=left:{\scriptsize $(1)$}] (N1) {};
								\node[node,right=6mm of N1] (N2) {};
								\node[node, right=6mm of N2,label=right:{\scriptsize $(2)$}] (N3) {};
								\draw[->] (N1) -- node[above] {$np$} (N2);
								\draw[->] (N2) -- node[above] {$U^\prime_1$} (N3);
				}}}}
				\to
				s 
			}{
				\vcenter{\hbox{{\tikz[baseline=.1ex]{
								\node[node,label=below:{\scriptsize $(1)$}] (N1) {};
								\node[node, right=6mm of N1,label=below:{\scriptsize $(2)$}] (N2) {};
								\draw[->] (N1) -- node[above] {$s$} (N2);
				}}}}
				\to
				s
				&
				\vcenter{\hbox{{\tikz[baseline=.1ex]{
								\node[node,label=below:{\scriptsize $(1)$}] (N1) {};
								\node[node, right=6mm of N1,label=below:{\scriptsize $(2)$}] (N2) {};
								\draw[->] (N1) -- node[above] {$np$} (N2);
				}}}}
				\to
				np
			}
			&
			\infer[{\scriptstyle (\div L) }]{
				\vcenter{\hbox{{\tikz[baseline=.1ex]{
								\node[node,label=left:{\scriptsize $(1)$}] (N1) {};
								\node[node,right=6mm of N1] (N2) {};
								\node[node, right=6mm of N2,label=right:{\scriptsize $(2)$}] (N3) {};
								\draw[->] (N1) -- node[above] {$U_2$} (N2);
								\draw[->] (N2) -- node[above] {$n$} (N3);
				}}}}
				\to
				np
			}{
				\vcenter{\hbox{{\tikz[baseline=.1ex]{
								\node[node,label=below:{\scriptsize $(1)$}] (N1) {};
								\node[node, right=6mm of N1,label=below:{\scriptsize $(2)$}] (N2) {};
								\draw[->] (N1) -- node[above] {$np$} (N2);
				}}}}
				\to
				np
				&
				\vcenter{\hbox{{\tikz[baseline=.1ex]{
								\node[node,label=below:{\scriptsize $(1)$}] (N1) {};
								\node[node, right=6mm of N1,label=below:{\scriptsize $(2)$}] (N2) {};
								\draw[->] (N1) -- node[above] {$n$} (N2);
				}}}}
				\to
				n
			}
		}	
	\end{equation}
	E.g. the last application of the rule ${\scriptstyle(\div L)}$ is done as follows: $N=U^\prime_1$; $H[e/N^\bullet]=\\\vcenter{\hbox{{\tikz[baseline=.1ex]{
					\node[node,label=left:{\scriptsize $(1)$}] (N1) {};
					\node[node,right=5mm of N1] (N2) {};
					\node[node, right=5mm of N2,label=right:{\scriptsize $(2)$}] (N3) {};
					\draw[->] (N1) -- node[above] {$np$} (N2);
					\draw[->] (N2) -- node[above] {$U^\prime_1$} (N3);
	}}}}$; $D = \vcenter{\hbox{{\tikz[baseline=.1ex]{
	\node[node,label=left:{\scriptsize $(1)$}] (N1) {};
	\node[node,right=5mm of N1] (N2) {};
	\node[node, right=5mm of N2,label=right:{\scriptsize $(2)$}] (N3) {};
	\draw[->] (N1) -- node[above] {$\$_2$} (N2);
	\draw[->] (N2) -- node[above] {$np$} (N3);
	}}}}$; $k=1$; $H_1 = \vcenter{\hbox{{\tikz[baseline=.1ex]{
	\node[node,label=left:{\scriptsize $(1)$}] (N1) {};
	\node[node,right=5mm of N1] (N2) {};
	\node[node, right=5mm of N2,label=right:{\scriptsize $(2)$}] (N3) {};
	\draw[->] (N1) -- node[above] {$U_2$} (N2);
	\draw[->] (N2) -- node[above] {$n$} (N3);
	}}}}$; $lab_D(d_1) = np$. Then we modify the antecedent as follows:
	\begin{center}
		\begin{tabular}{l|ccl|c}
			$H[e/N^\bullet]$
			&
			$\vcenter{\hbox{{\tikz[baseline=.1ex]{
							\node[node,label=below:{\scriptsize $(1)$}] (N1) {};
							\node[node,right=5mm of N1] (N2) {};
							\node[node, right=5mm of N2,label=below:{\scriptsize $(2)$}] (N3) {};
							\draw[->] (N1) -- node[above] {$np$} (N2);
							\draw[->] (N2) -- node[above] {$U^\prime_1$} (N3);
			}}}}$
			&
			&
			$H[e/D[e^\$_D/(N\!\div\! D)^\bullet]]$
			&
			$\vcenter{\hbox{{\tikz[baseline=.1ex]{
							\node[node,label=below:{\scriptsize $(1)$}] (N1) {};
							\node[node,right=5mm of N1] (N2) {};
							\node[node,right=5mm of N2] (N3) {};
							\node[node, right=5mm of N3,label=below:{\scriptsize $(2)$}] (N4) {};
							\draw[->] (N1) -- node[above] {$np$} (N2);
							\draw[->] (N2) -- node[above] {$U_1$} (N3);
							\draw[->] (N3) -- node[above] {$np$} (N4);
			}}}}$
			\\
			$H[e/D]$
			&
			$\vcenter{\hbox{{\tikz[baseline=.1ex]{
							\node[node,label=below:{\scriptsize $(1)$}] (N1) {};
							\node[node,right=5mm of N1] (N2) {};
							\node[node,right=5mm of N2] (N3) {};
							\node[node, right=5mm of N3,label=below:{\scriptsize $(2)$}] (N4) {};
							\draw[->] (N1) -- node[above] {$np$} (N2);
							\draw[->] (N2) -- node[above] {$\$_2$} (N3);
							\draw[->] (N3) -- node[above] {$np$} (N4);
			}}}}$
			&
			&
			$H[e/D[e^\$_D/(N \!\div\! D)^\bullet,d_1/H_1]]$
			&
			$\vcenter{\hbox{{\tikz[baseline=.1ex]{
							\node[node,label=below:{\scriptsize $(1)$}] (N1) {};
							\node[node,right=5mm of N1] (N2) {};
							\node[node,right=5mm of N2] (N3) {};
							\node[node,right=5mm of N3] (N4) {};
							\node[node, right=5mm of N4,label=below:{\scriptsize $(2)$}] (N5) {};
							\draw[->] (N1) -- node[above] {$np$} (N2);
							\draw[->] (N2) -- node[above] {$U_1$} (N3);
							\draw[->] (N3) -- node[above] {$U_2$} (N4);
							\draw[->] (N4) -- node[above] {$n$} (N5);
			}}}}$
			\\
		\end{tabular}
	\end{center}
\end{example}
In general, the function $\tau$ defined below is an embedding of $\LC$ in $\HL$:
\begin{multicols}{2}
	\begin{enumerate}
		\item $\tau(p)\eqdef p$, $p\in \mathit{Pr}$, $\rk(p)=2$;
		\item $\tau(A/B)\eqdef \tau(A) \div \SG(\$_2\;\tau(B))$;
		\item $\tau(B\backslash A)\eqdef \tau(A) \div \SG(\tau(B) \; \$_2)$;
		\item $\tau(A\cdot B)\eqdef \times \SG(\tau(A)\;\tau(B))$.
	\end{enumerate}
\end{multicols}
Namely, the following theorem is proved in \cite{Pshenitsyn22_2}: 
\begin{theorem}
	A sequent $A_1,\dotsc,A_n \to B$ is derivable in $\LC$ if and only if its translation $\SG(\tau(A_1)\dotsc \tau(A_n)) \to \tau(B)$ is derivable in $\HL$. 
\end{theorem}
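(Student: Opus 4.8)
The plan is to prove the two directions separately, both by induction on derivations, with the left-to-right direction a routine simulation and the right-to-left direction requiring a rigidity argument about string graphs. For the ``only if'' direction I would induct on a derivation of $A_1,\dotsc,A_n \to B$ in $\LC$ and mirror each $\LC$ rule by the corresponding $\HL$ rule under $\tau$. The axiom $A \to A$ translates to $\SG(\tau(A)) \to \tau(A)$, which is exactly the $\HL$ axiom $\tau(A)^\bullet\to\tau(A)$, since the one-edge string graph is isomorphic to a handle of rank $2$. Each connective rule matches a hypergraph rule: $(/R)$ and $(\backslash R)$ become $(\div R)$, using that $\tau(B'/A')$ and $\tau(A'\backslash B')$ place $\$_2$ at, respectively, the first and the last edge of the denominator string graph, so that $D[e^\$_D/F]$ is precisely the concatenation appending or prepending $\SG(\tau(A'))$; $(\cdot R)$ becomes $(\times R)$ with $M=\SG(\tau(A')\tau(B'))$ split at the product; $(\cdot L)$ becomes $(\times L)$; and $(/L)$, $(\backslash L)$ become $(\div L)$. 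In each case I would verify directly from the definitions of $\tau$, of $\SG$, and of hyperedge replacement that the hypergraphs produced by the $\HL$ rule are exactly the string graphs of the translated $\LC$ premises.

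For the ``if'' direction I would induct on the height of an $\HL$-derivation of $\SG(\tau(A_1)\dotsc\tau(A_n)) \to \tau(B)$ and analyse the last rule. Two auxiliary facts are needed: (i) $\tau$ is injective, so each edge label determines a unique $\LC$-formula; and (ii) every translated type has rank $2$ and its top connective is visible in its shape. Since the principal type of the last rule appears directly as a label in the conclusion sequent, it is some $\tau(A_{i_0})$ or $\tau(B)$, and by (ii) this forces $N$, $D$, or $M$ to have exactly the form dictated by $\tau$ (for instance $\tau(B'/A') = \tau(B')\div\SG(\$_2\,\tau(A'))$, so $N=\tau(B')$ and $D=\SG(\$_2\,\tau(A'))$). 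The crucial structural observation is \emph{path rigidity}: because in every translated denominator the $\$$-hyperedge occupies an endpoint of the underlying path (the first edge for $/$, the last for $\backslash$) and every translated $M$ is a two-edge path, each $\HL$ rule applied to a string-graph sequent can only decompose it into \emph{contiguous substrings}.

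With path rigidity in hand, each case reconstructs exactly one $\LC$ rule. The context $H$ is again a string graph and the auxiliary antecedents $H_1,\dotsc,H_k$ are string graphs of consecutive blocks of the $\tau(A_i)$: $(\div L)$ yields $(/L)$ or $(\backslash L)$ according to the position of $\$_2$ in $D$, with $H_1=\SG(\tau(A_{i_0+1})\dotsc\tau(A_j))\to\tau(A')$ supplying the premise $\Pi\to A'$; $(\times L)$ yields $(\cdot L)$ by splitting the $\tau(A'\cdot B')$-edge into $\tau(A')\tau(B')$; $(\div R)$ yields $(/R)$ or $(\backslash R)$; and $(\times R)$ yields $(\cdot R)$ by splitting the antecedent at the unique point separating $H_1\to\tau(A')$ from $H_2\to\tau(B')$. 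The axiom case forces $n=1$, whence $\tau(A_1)=\tau(B)$ and, by injectivity, $A_1=B$. Non-emptiness of the $H_i$ (the $\HL$ antecedent constraint) matches the non-emptiness of $\Pi$, $\Psi$ required by Lambek's restriction. Applying the induction hypothesis to the premises, which are themselves translations of $\LC$ sequents and have shorter $\HL$-derivations, and then the matching $\LC$ rule produces the desired $\LC$-derivation.

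I expect the main obstacle to be establishing path rigidity rigorously: proving that gluing the handle $(N\div D)^\bullet$ and the subgraphs $H_i$ into the endpoint-anchored denominator $D$ inside a string graph necessarily keeps the result a string graph and pins down the $H_i$ as consecutive segments. This is where the placement of $\$_2$ at the ends of the denominators, the rank-$2$ uniformity of translated types, and the injectivity of $\tau$ all have to be combined; once it is in place, every rule case reduces to a bookkeeping computation matching external nodes and substrings.
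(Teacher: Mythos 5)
The paper does not actually prove this theorem: it is imported verbatim from \cite{Pshenitsyn22_2} (``the following theorem is proved in [13]''), so there is no in-text argument to compare yours against. Judged on its own terms, your plan is the natural one and its forward direction is complete and correct: the translated axiom is the handle axiom, and each of the six $\LC$ rules is simulated by the matching $\HL$ rule with $D$ or $M$ instantiated to the two-edge string graphs prescribed by $\tau$; the only checks needed are the replacement computations you describe, and they go through.

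For the converse, you have correctly identified that the whole proof lives in the ``path rigidity'' lemma, and you have correctly deferred it rather than claimed it; but be aware that this is not a small residue --- it is essentially the entire content of the direction, and stating it cleanly requires several sub-claims that your sketch only gestures at: (a) if $G[e/K]$ is a string graph and $\rk(K)=2$, then $K$ is itself a string graph whose image is a contiguous segment, traversed in the order given by $ext_K$ (the edge orientations $v_{i-1}\to v_i$ rule out reversed segments); (b) the two external nodes of each $H_i$ are not identified, since otherwise the substitution would create a cycle or a node of wrong degree in the conclusion; (c) the context $H[e/N^\bullet]$ obtained by collapsing a contiguous segment back to a single rank-$2$ edge is again a translated string-graph sequent, so the induction hypothesis applies; and (d) injectivity of $\tau$ together with the disjointness of the shapes produced by its three clauses (primitive, $\div$ with $\$_2$ at the left or right end, $\times$ of a two-edge path), which pins down the principal $\LC$ connective. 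None of these should fail, and your remark that Lambek's restriction on both sides keeps the $\Pi$, $\Psi$ and the $H_i$ nonempty is the right observation; but until (a)--(d) are written out the converse is a program rather than a proof. As a proof strategy the proposal is sound and, as far as one can tell, coincides with the route taken in the cited reference.
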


\section{Linearly Restricted DPO Grammars}\label{sec_linDPO}
DPO grammars are more powerful than $\HL$-grammars since they are universal; the former are undecidable and $\Sigma_1$-complete while the latter are NP-complete. Nevertheless, it turns out that a precise relation between these two formalisms can be established if we impose the following linear restriction on lengths of derivations in a DPO grammar.
\begin{definition}\label{def_lin_DPO}
	\emph{A linearly restricted DPO grammar (lin-DPO grammar)} $Gr\restriction_c$ is a DPO grammar $Gr=\langle \mathcal{N},\mathcal{T},\mathcal{P}, Z\rangle$ equipped with an integer $c \in \mathbb{N}$. The language $L(Gr \restriction_c)$ generated by $Gr$ is the set of all hypergraphs $H \in \mathcal{H}(\mathcal{T})$ such that $Z \Rightarrow^k H$ for $k \le c |E_H|$.
\end{definition}
\begin{definition}
	\emph{An edgeful lin-DPO grammar} is a lin-DPO grammar \\ $\langle \mathcal{N},\mathcal{T},\mathcal{P}, Z\rangle$ such that $|E_Z|>0$ and $|E_L|>0$, $|E_R|>0$ for each $(L\mid R) \in \mathcal{P}$.
\end{definition}
Thus we limit the number of rule applications by a linear function w.r.t. the number of hyperedges of the resulting hypergraph. We have not find a similar definition in the field of graph grammars, but similar \textit{time-bounded grammars} have been studied in the string case (in particular, those with the linear bound) in \cite{Book71, Gladkii64}. In \cite{Book71}, the author notes (Proposition 2.5) that the linear restriction is the least possible one that does not lead to a collapse (i.e. to the case where only finite languages can be generated). This holds in the hypergraph case as well.

Our main goal is to prove the following result:
\begin{theorem}\label{th_main}
\leavevmode
\begin{enumerate}
	\item $\HL$-grammars are equivalent to edgeful lin-DPO grammars.
	\item The classes of languages without edgeless hypergraphs that are generated by $\HL^\ast$-grammars and by lin-DPO grammars are equal.
\end{enumerate}
\end{theorem}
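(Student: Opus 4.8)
The plan is to prove each of the two parts as a pair of converse inclusions, so four simulations in total. For the inclusions $\text{edgeful lin-DPO} \subseteq \HL$ and $\text{lin-DPO} \subseteq \HL^\ast$ I would reuse and refine the construction of \cite{Pshenitsyn22_1}, which already shows that $\HL$-grammars are at least as expressive as linearly restricted DPO grammars. The novel and substantially harder content is the reverse: every $\HL$-grammar can be converted into an edgeful lin-DPO grammar, and every $\HL^\ast$-grammar into a lin-DPO grammar, generating the same language. This is the exact analogue of the difficult half of the Pentus theorem, where one shows that the Lambek calculus does not generate more than context-free grammars.

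For the easy direction I would encode each DPO rule $(L\mid R)$ as a family of types. Reading a DPO derivation $Z \Rightarrow^\ast H$ backwards as an $\HL$ derivation, a hyperedge of $H$ is assigned a type that records which rule could have introduced it together with a ``budget'' counter; the linear bound $k \le c|E_H|$ is precisely what lets the length of the simulating $\HL$ derivation be controlled by the number of hyperedges, since each $\HL$ inference consumes a bounded amount of budget. Crucially, the linear restriction is exactly what makes plain (exponential-free) $\HL$ suffice, in contrast to the general undecidable case, which would require exponentials. The edgeful hypothesis guarantees that no rule erases all hyperedges, so Lambek's restriction (no edgeless antecedents and $|E_D|>1$) can be respected throughout; dropping it yields the $\HL^\ast$ version and the corresponding lin-DPO grammar without the edgeful requirement.

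For the hard direction I would proceed proof-theoretically. Exploiting the subformula property of the cut-free system $\HL$, I would establish an \emph{interpolation lemma}: whenever $\HL \vdash f_G(G)\to S$ and $G_0$ is a subhypergraph of $G$ with designated interface nodes, there is an interpolant type $I$ with $\rk(I)=\rk(G_0)$ such that $\HL\vdash f_G(G_0)\to I$ and, after replacing the occurrence of $f_G(G_0)$ inside $f_G(G)$ by the handle $I^\bullet$, the resulting sequent is still derivable. The crux—exactly as in Pentus's argument—is to bound the complexity of the interpolants so that, for a fixed grammar, only finitely many of them occur. I would then take these finitely many interpolant types as the nonterminal labels of a DPO grammar: the start hypergraph is the handle $S^\bullet$, and each DPO rule $(L\mid R)$ inverts one $\HL$ inference, rewriting a hyperedge labeled by an interpolant into the hypergraph of interpolants of its immediate subderivations, with terminal rules realizing the relation $\triangleright$ at the leaves. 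Because each hyperedge of the generated graph is produced by a number of decomposition steps bounded by the (constant) maximal interpolant complexity, the total derivation length is linear in $|E_H|$, which is exactly the linear restriction.

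The main obstacle I expect is the interpolation-with-bounded-complexity step, the faithful hypergraph counterpart of Pentus's key lemma: in the string case the interpolants are controlled via the free-group (count) invariant together with a careful induction on the structure of types, and transferring this to hypergraph sequents—where the linear order of premises is replaced by an attachment structure and where the $\div$ connective carries a whole denominator hypergraph $D$—requires a genuinely new combinatorial bound. A secondary but real difficulty is keeping the two restrictions (edgeful versus edgeless, and the Lambek condition $|E_D|>1$) in exact correspondence on both sides; this is where allowing repeated external nodes becomes indispensable, as the translation of certain $\div$-types into DPO rules forces node coincidences (cf. Remark~\ref{rem_empty_antecedents} and Example~\ref{ex_translation_HL_to_DPO}).
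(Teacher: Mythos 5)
Your first direction (lin-DPO to $\HL$) matches the paper: the ``budget counter'' you describe is exactly the construction $\mathrm{HLG}(Gr)$ with the bounded multiplicities $k_r$ of $\DPO(r)$-labeled hyperedges attached to each terminal type, imported from \cite{Pshenitsyn22_1} as Theorem \ref{th_from_dpo_to_hlg}. The problem is your hard direction. A DPO grammar whose rules each rewrite \emph{a single hyperedge labeled by an interpolant} into a hypergraph of interpolants of the immediate subderivations is, in effect, a hyperedge replacement grammar, and that class is known to be strictly weaker than $\HL$-grammars (the paper recalls that $\HL$-grammars generate finite intersections of HR languages, which HR grammars cannot; see \cite{Pshenitsyn22_2}). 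So no choice of finitely many interpolant types can make your top-down decomposition from $S^\bullet$ work: the approach fails not because the interpolation bound is hard to transfer from Pentus's argument, but because the target grammar shape you extract from it is too weak. The Pentus-style machinery is needed in the string case precisely because context-free grammars can only rewrite one nonterminal at a time; DPO rules have no such limitation, and exploiting that is the whole point of the paper's construction.

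What the paper actually does (Section \ref{ssec_from_hl_to_lindpo}, Lemma \ref{lemma_main}) is encode \emph{entire sequents} as hypergraphs, marking antecedent occurrences by $T^\ominus$ and succedent occurrences by $T^\oplus$, and then turn each inference rule of $\HL$ into a DPO rule whose left-hand side is a multi-hyperedge, generally disconnected pattern spanning several premise sequents (plus bookkeeping labels $\beta$, $\kappa$, $q_i$ to tie together the connected components of a single sequent and to stage the derivation). The derivation of the DPO grammar replays the $\HL$ proof bottom-up from the axioms rather than top-down from $S^\bullet$. Finiteness of the nonterminal alphabet comes for free from the subformula property (only subtypes in $ST_{Gr}$ occur), so no interpolation lemma is needed at all; and the linear bound is not a consequence of bounded interpolant complexity but of the elementary Lemma \ref{lemma_size_der}, which says that the total number of axioms and rule applications in any cut-free $\HL^\ast$-derivation of $H \to S$ is at most $k(|E_H|+1)$ for a grammar-dependent constant $k$. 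Your closing observation about repeated external nodes being forced by the translation of $\div$-types is correct and is indeed why the paper relaxes the earlier definition of $\HL$ (Remark \ref{rem_empty_antecedents}), but it arises inside this sequent-encoding construction, not inside an interpolation argument.
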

The proof of both statements has two parts: from lin-DPO grammars to $\HL$-grammars (Section \ref{ssec_from_lindpo_to_hl}) and vice versa (Section \ref{ssec_from_hl_to_lindpo}). We describe the constructions only for $\HL^\ast$, but in Section \ref{ssec_rem_and_cor} we explain why they also work for $\HL$-grammars without any changes.

The first part involves encoding DPO rules within $\HL^\ast$, and a linear restriction naturally arises when we need to use this encoding at the level of $\HL^\ast$-grammars. This part was already presented in \cite{Pshenitsyn22_1} but we would like to explain the idea in this paper as well to make it self-contained; moreover we improve the presentation.
The second part is more straightforward: we need to describe axioms and rules of $\HL$ using DPO rules and then to show that the linear restriction is satisfied. There are, however, technical issues concerned with disconnected antecedents, which make the final construction more complicated. In this paper, we present the construction and sketch the proof of its correctness.

\subsection{From Lin-DPO Grammars to HL-grammars}\label{ssec_from_lindpo_to_hl}

In \cite{Pshenitsyn22_1}, an observation is made: an application of a rule of a DPO grammar, which consists of an inverse replacement followed by a straightforward replacement, can be modeled within $\HL$ using the rules ${\scriptstyle(\times L)}$ and ${\scriptstyle(\div L)}$. Namely, a rule $r=(L|R)$ is transformed into the type $\DPO(r) = \times(L) \div (R+\$_0^\bullet)$.

\begin{example}\label{ex_type_der_from_DPO_to_HL}
	Given the rule $\rho = (L_\rho | R_\rho)$ from Example \ref{ex_dpo_rule}, $\DPO(\rho)$ equals 
	$$\DPO(\rho) = \times\left( \vcenter{\hbox{{\tikz[baseline=.1ex]{
					\node[node, label=above:{\scriptsize $(1)$}] (N) {};
					\node[node, below left=4mm and 4mm of N, label=left:{\scriptsize $(2)$}] (N1) {};
					\node[node, below right=4mm and 4mm of N, label=right:{\scriptsize $(3)$}] (N2) {};
					\draw[->] (N) -- node[above left] {$l$} (N1);
					\draw[->] (N) -- node[above right] {$r$} (N2);
	}}}} \right) \div \left( \vcenter{\hbox{{\tikz[baseline=.1ex]{
					\node[node] (N) {};
					\node[node, below = 2.7mm of N, label=left:{\scriptsize $(2)$}] (N1) {};
					\node[node, right=11mm of N1, label=right:{\scriptsize $(3)$}] (N2) {};
					\node[node, right=11mm of N, label=right:{\scriptsize $(1)$}] (N3) {};
					\node[hyperedge, right = 2.5mm of N] (F) {$f$};
					\draw[->] (N1) -- node[below] {$t$} (N2);
					\draw[-] (N) -- node[above] {\scriptsize 1} (F);
	}}}}\;\;\vcenter{\hbox{{\tikz[baseline=.1ex]{\node[hyperedge] {$\$_0$};}}}} \right).$$

	Compare the application (\ref{ex_application_dpo_rule}) of the rule $\rho$ from Example \ref{ex_dpo_rule} with the derivation (\ref{ex_der_HL_hyper}) in $\HL$ involving $\DPO(\rho) = \times(L_\rho)\div (R_\rho + \$_0^\bullet)$:
	
	\begin{equation}\label{ex_der_HL_hyper}
		\infer[{\scriptstyle (\div L)}]{
			\vcenter{\hbox{{\tikz[baseline=.1ex]{
							\node[node] (N) {};
							\node[node, below=4.2mm of N] (N2) {};
							\node[node, left=4mm and 8mm of N2, label=left:{\scriptsize $(1)$}] (N1) {};
							\node[hyperedge,left=12.5mm of N] (E2) {$f$};
							\node[node, left=2mm of E2] (N3) {};
							\draw[-] (N3) -- node[below] {\scriptsize 1} (E2);
							\draw[->] (N1) to[bend right = 50] node[below] {$t$} (N2);
							\draw[->] (N2) to[bend right = 0] node[above] {$q$} (N1);
			}}}}\;\;\vcenter{\hbox{{\tikz[baseline=.1ex]{\node[hyperedge] {$\DPO(\rho)$};}}}}\quad \to \quad
			\times\left(\vcenter{\hbox{{\tikz[baseline=.1ex]{
							\node[node] (N) {};
							\node[node, below left = 3.3mm and 3.3mm of N, label=left:{\scriptsize $(1)$}] (N1) {};
							\node[node, below right = 3.3mm and 3.3mm of N] (N2) {};
							\draw[->] (N2) -- node[below] {$q$} (N1);
							\draw[->] (N) to[bend right = 0] node[above] {$l$} (N1);
							\draw[->] (N) to[bend left = 0] node[above] {$r$} (N2);
			}}}}\right)
		}
		{
			\infer[{\scriptstyle (\times L)}]{
				\vcenter{\hbox{{\tikz[baseline=.1ex]{
								\node[node] (N) {};
								\node[hyperedge,below=2.5mm of N] (E1) {$\times(L_\rho)$};
								\node[node, below left=10mm and 6.3mm of N, label=left:{\scriptsize $(1)$}] (N1) {};
								\node[node, below right=10mm and 6.3mm of N] (N2) {};
								\draw[->] (N2) -- node[below] {$q$} (N1);
								\draw[-] (N) -- node[left] {\scriptsize 1} (E1);
								\draw[-] (N1) -- node[above left] {\scriptsize 2} (E1);
								\draw[-] (N2) -- node[above right] {\scriptsize 3} (E1);
				}}}} \to \times\left(\vcenter{\hbox{{\tikz[baseline=.1ex]{
					\node[node] (N) {};
					\node[node, below left = 3.3mm and 3.3mm of N, label=left:{\scriptsize $(1)$}] (N1) {};
					\node[node, below right = 3.3mm and 3.3mm of N] (N2) {};
					\draw[->] (N2) -- node[below] {$q$} (N1);
					\draw[->] (N) to[bend right = 0] node[above] {$l$} (N1);
					\draw[->] (N) to[bend left = 0] node[above] {$r$} (N2);
				}}}}\right)
			}
			{
					\vcenter{\hbox{{\tikz[baseline=.1ex]{
									\node[node] (N) {};
									\node[node, below left = 3.3mm and 3.3mm of N, label=left:{\scriptsize $(1)$}] (N1) {};
									\node[node, below right = 3.3mm and 3.3mm of N] (N2) {};
									\draw[->] (N2) -- node[below] {$q$} (N1);
									\draw[->] (N) to[bend right = 0] node[above] {$l$} (N1);
									\draw[->] (N) to[bend left = 0] node[above] {$r$} (N2);
					}}}} \to \times\left(\vcenter{\hbox{{\tikz[baseline=.1ex]{
					\node[node] (N) {};
					\node[node, below left = 3.3mm and 3.3mm of N, label=left:{\scriptsize $(1)$}] (N1) {};
					\node[node, below right = 3.3mm and 3.3mm of N] (N2) {};
					\draw[->] (N2) -- node[below] {$q$} (N1);
					\draw[->] (N) to[bend right = 0] node[above] {$l$} (N1);
					\draw[->] (N) to[bend left = 0] node[above] {$r$} (N2);
				}}}}\right)
			}
			&
			t^\bullet \to t
			&
			f^\bullet \to f
		}
	\end{equation}
	In (\ref{ex_der_HL_hyper}), we firstly apply the rule ${\scriptstyle(\times L)}$ thus performing an inverse replacement in the antecedent and then the rule ${\scriptstyle(\div L)}$ thus performing several replacements (in this example, some of them change nothing). After the application of ${\scriptstyle(\div L)}$ a hyperedge of rank $0$ labeled by $\DPO(\rho)$ appears (it arises from $\$_0^\bullet$). 
\end{example}
Informally, the $\DPO(\rho)$-labeled hyperedge is the trace of the application of $\rho$, the resource needed to apply it within $\HL$. As a consequence, if we consider a derivation $Z \Rightarrow_{r_1} G_1 \Rightarrow_{r_2} \dotsc \Rightarrow_{r_k} G_k=G$, then we can remodel it within $\HL$ starting with the sequent $Z \to \times(Z)$ and obtaining the sequent $G + (\DPO(r_1))^\bullet + \dotsc + (\DPO(r_k))^\bullet \to \times(Z)$; here we treat nonterminal and terminal symbols as primitive types. However, this is not enough to present a transformation of DPO grammars into $\HL$-grammars since we need to describe how to assign types of $\HL$ to symbols of the alphabet. So, our main objective is, given a lin-DPO grammar $\langle \mathcal{N},\mathcal{T},\mathcal{P},Z\rangle \restriction_c$, to define an $\HL$-grammar $HLG$ generating the same language.
To recall, a hypergraph $G$ is generated by an $\HL$-grammar $HLG = \langle \mathcal{T},S,\triangleright\rangle$ if there is a relabeling $f_G$ of $G$ by types of $\HL$ such that $lab_G(e) \triangleright f_G(e)$ and $\HL \vdash f_G(G) \to S$. The above reasonings suggest us to let $S \eqdef \times(Z)$. Now, let us compare the sequent $G + (\DPO(r_1))^\bullet + \dotsc + (\DPO(r_k))^\bullet \to \times(Z)$ with the sequent $f_G(G) \to S$. We need to somehow ``hide'' the floating hyperedges labeled by $\DPO(r_i)$ using $f_G$. We can do this by merging them with hyperedges of $G$ using the rule ${\scriptstyle(\times L)}$, which leads us to the following construction:
\begin{definition}
	Let $Gr=\langle \mathcal{N}, \mathcal{T}, \mathcal{P}, Z \rangle \restriction_c$ be a lin-DPO grammar. Then, assuming that symbols from $\mathcal{N}\cup \mathcal{T}$ are primitive types, we construct an \\ $\HL^\ast$-grammar $\mathrm{HLG}(Gr) = \langle \mathcal{T}, \times(Z), \triangleright \rangle$ where $\triangleright$ consists of the following pairs: 
	\begin{equation}\label{eqn_def_hlg_c}
		a \triangleright \times\left(a^\bullet + \sum_{r \in \mathcal{P}} k_{r}\cdot \DPO(r)^\bullet\right)
		\mbox{ for }
		a \in \mathcal{T}, \; k_r \in \mathbb{N}, \; \sum_{r \in \mathcal{P}} k_{r} \le c.
	\end{equation}
\end{definition}
Note that $\triangleright$ is a finite relation since there are finitely many combinations of $k_r \in \mathbb{N}$ satisfying the above requirements. Informally, we take an $a$-labeled hyperedge and add several $\DPO(r)$-labeled hyperedges to it. Here the linear restriction arises: the relation $\triangleright$ must be finite so we cannot let (\ref{eqn_def_hlg_c}) hold for all $k_r \in \mathbb{N}$; then it is natural to assume that types assigned to some symbol are able to store at most some fixed number $c$ of $\DPO(r)$-labeled hyperedges. Thus the total number of such hyperedges (and, consequently, the number of rule applications in a derivation of $G$ in a DPO grammar) is limited by $c|E_G|$.

The following result is proved in \cite[Theorem 1]{Pshenitsyn22_1}:
\begin{theorem}\label{th_from_dpo_to_hlg}
	$L(\mathrm{HLG}(Gr)) = L(Gr)$ for each lin-DPO grammar $Gr$.
\end{theorem}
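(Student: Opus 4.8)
The plan is to route both inclusions through a single bridging lemma that makes the informal correspondence of Example~\ref{ex_type_der_from_DPO_to_HL} precise: for every $G \in \mathcal{H}(\mathcal{N}\cup\mathcal{T})$ (reading labels as primitive types) and every multiset of rules $r_1,\dots,r_k \in \mathcal{P}$,
\[
\HL^\ast \vdash G + \textstyle\sum_{i=1}^k \DPO(r_i)^\bullet \to \times(Z) \quad\Longleftrightarrow\quad Z \Rightarrow^k G \text{ for some ordering of the } r_i.
\]
Granting this lemma, the theorem follows by two uniform reductions that only manipulate $(\times L)$. For $L(Gr)\subseteq L(\mathrm{HLG}(Gr))$: given $Z\Rightarrow^k G$ with $k\le c|E_G|$, the lemma supplies a derivation of $G + \sum_i\DPO(r_i)^\bullet \to \times(Z)$; since $k \le c|E_G|$, the $k$ floating rank-$0$ hyperedges can be distributed among the $|E_G|$ edges of $G$ with at most $c$ per edge, and folding each edge together with its assigned $\DPO$-edges by $(\times L)$ yields a relabeling $f_G$ with $lab_G(e)\triangleright f_G(e)$ and $\HL^\ast\vdash f_G(G)\to\times(Z)$, i.e.\ $G\in L(\mathrm{HLG}(Gr))$. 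Conversely, from a witness $f_G$ with $\HL^\ast\vdash f_G(G)\to\times(Z)$, each $f_G(e)$ is a top-level product $\times(M_e)$, so inverting $(\times L)$ on every edge produces a derivation of $G + \sum_i\DPO(r_i)^\bullet \to \times(Z)$ whose total number of $\DPO$-edges is $\sum_e\sum_r k_r^{(e)} \le c|E_G|$; the lemma then gives $Z\Rightarrow^k G$ with $k\le c|E_G|$, i.e.\ $G\in L(Gr)$.

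The easy ($\Leftarrow$) half of the bridging lemma is proved by induction on $k$. The base case $k=0$ is the identity sequent $Z\to\times(Z)$, obtained by one $(\times R)$ over the axioms $lab_Z(m)^\bullet\to lab_Z(m)$. For the step, write $G_{k-1}=C[e_0/L_{r_k}]$ and $G=C[e_0/R_{r_k}]$; starting from the derivation of $G_{k-1}+\sum_{i<k}\DPO(r_i)^\bullet\to\times(Z)$ given by the induction hypothesis, I would first fold its copy of $L_{r_k}$ into a single handle $\times(L_{r_k})^\bullet$ by $(\times L)$, and then apply $(\div L)$ to $\DPO(r_k)=\times(L_{r_k})\div(R_{r_k}+\$_0^\bullet)$ with handle side-premises $lab(d)^\bullet\to lab(d)$ for the edges $d$ of $R_{r_k}$. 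Replacing $\$_0^\bullet$ by $\DPO(r_k)^\bullet$ and the $R_{r_k}$-edges by themselves reconstitutes $C[e_0/R_{r_k}]$ with one new floating $\DPO(r_k)$-edge, so the conclusion is exactly $G+\sum_{i\le k}\DPO(r_i)^\bullet\to\times(Z)$.

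The hard ($\Rightarrow$) half is where the real work lies and is the main obstacle. Since $\HL^\ast$ has no cut rule, derivations automatically enjoy the subformula property, and I would analyse a derivation of $G + \sum_i\DPO(r_i)^\bullet\to\times(Z)$ by induction on $k$ together with derivation height, inspecting the root rule. As the antecedent contains no top-level product and the succedent is not a division, the root must be either $(\times R)$ (forcing $M=Z$) or $(\div L)$ on one of the $\DPO$-edges. The decisive observation is that a $(\div L)$ on a $\DPO(r)$-edge, combined with invertibility of $(\times L)$ to unfold the resulting $\times(L_r)^\bullet$ handle back into $L_r$, realizes precisely one \emph{inverse} DPO step — it replaces an occurrence of $R_r$ by $L_r$ in a fixed context and deletes one $\DPO(r)$ token — \emph{provided} its side premises $H_i\to lab(d_i)$ reduce to the axioms $lab(d_i)^\bullet\to lab(d_i)$. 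The crux is therefore to show that an arbitrary cut-free derivation can be normalized into this canonical shape: that the root $(\times R)$ pins the top-level structure to $Z$, that the primitive-succedent side premises can be forced down to handles, and that the $(\div L)$ steps can be permuted so as to read off a legitimate sequence of inverse applications $G\Rightarrow^\ast Z$. Controlling the interleaving of $(\div L)$ and $(\times R)$ — in particular ruling out spurious decompositions in which a nonterminal edge is expanded by a side premise corresponding to no rule — is the technically delicate part, and I expect it to require establishing the relevant invertibility lemmas for $\HL^\ast$ and a permutation argument guaranteeing that each $\DPO$-edge is consumed by a genuine inverse application of its own rule.
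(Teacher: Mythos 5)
You should first note the point of comparison: the paper does not prove Theorem~\ref{th_from_dpo_to_hlg} in this text at all --- it is imported from \cite{Pshenitsyn22_1} --- and Section~\ref{ssec_from_lindpo_to_hl} only supplies the construction of $\mathrm{HLG}(Gr)$ together with the informal correspondence $Z \Rightarrow_{r_1} \dotsc \Rightarrow_{r_k} G \;\leadsto\; G + \DPO(r_1)^\bullet + \dotsc + \DPO(r_k)^\bullet \to \times(Z)$. Your architecture --- a bridging lemma equating derivability of $G+\sum_i \DPO(r_i)^\bullet \to \times(Z)$ with $Z \Rightarrow^k G$, plus $(\times L)$-folding and unfolding to pass between that sequent and $f_G(G)\to\times(Z)$ --- is exactly the intended route, and your $(\Leftarrow)$ induction is correct: one $(\times L)$ contracting the occurrence of $L_{r_k}$ into $(\times(L_{r_k}))^\bullet$ followed by one $(\div L)$ on $\DPO(r_k)$ with handle side premises reproduces a forward DPO step and deposits one $\DPO(r_k)^\bullet$ token, and distributing $k\le c|E_G|$ tokens with at most $c$ per hyperedge matches the definition of $\triangleright$.

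The genuine gap is that the $(\Rightarrow)$ half of your bridging lemma --- which carries the entire inclusion $L(\mathrm{HLG}(Gr)) \subseteq L(Gr)$ --- is announced but not proved. You correctly locate the danger points, but naming them does not discharge them: (i) a $(\div L)$ on a $\DPO(r)$-edge may have side premises $H_i \to lab_{R_r}(d_i)$ in which $H_i$ is not a handle, because the rank-$0$ hyperedges $\DPO(r')^\bullet$ can be absorbed into any side premise (for instance, for a rule $r'=(p^\bullet\mid p^\bullet)$ with $\rk(p)=0$ the sequent $p^\bullet + \DPO(r')^\bullet \to p$ is derivable), so one must prove that the tokens can be permuted out of side premises and consumed one per inverse rewriting step; (ii) a root $(\times R)$ with $M=Z$ splits the antecedent into pieces that may each carry $\DPO$-tokens, so the induction cannot proceed on the root rule alone; and (iii) your reduction from $f_G(G)\to\times(Z)$ back to $G+\sum_i\DPO(r_i)^\bullet\to\times(Z)$ silently invokes invertibility of $(\times L)$ in $\HL^\ast$, which must itself be established (e.g.\ via admissibility of cut, proved in the cited literature) rather than assumed. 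Until the normalization and permutation argument in (i)--(ii) is actually carried out, only $L(Gr)\subseteq L(\mathrm{HLG}(Gr))$ is established; for the full equality you must either supply that argument or, as the paper does, delegate it to \cite{Pshenitsyn22_1}.
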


\subsection{From HL-Grammars to Lin-DPO Grammars}\label{ssec_from_hl_to_lindpo}
The main idea of the second part is to introduce DPO rules modelling the inference rules of $\HL$. Before doing this, however, we must somehow represent sequents as hypergraphs. Let us start with showing how to model $\LC$ using the DPO approach; after that we will consider the general case.

Recall that a sequent $A_1,\dotsc,A_n \to B$ of the Lambek calculus is embedded in $\HL$ using string graphs: $$\vcenter{\hbox{{\tikz[baseline=.1ex]{
				\node[node,label=left:{\scriptsize $(1)$}] (N1) {};
				\node[node,right=8mm of N1] (N2) {};
				\node[right=3mm of N2] (A) {};
				\node[right=3mm of A] (B) {};
				\node[node,right=3mm of B] (N3) {};
				\node[node, right=8mm of N3,label=right:{\scriptsize $(2)$}] (N4) {};
				\draw[->] (N1) -- node[above] {$\tau(A_1)$} (N2);
				\draw[-] (N2) --  (A);
				\draw[->] (B) -- (N3);
				\draw[->,white] (A) -- node[above] {\color{black} $\dotsc$} (B);
				\draw[->] (N3) -- node[above] {$\tau(A_n)$} (N4);
}}}} \to \tau(B).$$
To combine the antecedent and the succedent into a single graph, let us distinguish types in antecedents and in succedents. For each $T \in Tp$ we introduce nonterminal symbols $T^\ominus$ (antecedental) and $T^\oplus$ (succedental). Then the above sequent is to be represented as follows:
$$
\vcenter{\hbox{{\tikz[baseline=.1ex]{
				\node[node] (N1) {};
				\node[node,right=10mm of N1] (N2) {};
				\node[right=3mm of N2] (A) {};
				\node[right=5mm of A] (B) {};
				\node[node,right=3mm of B] (N3) {};
				\node[node, right=10mm of N3] (N4) {};
				\draw[->] (N1) -- node[above] {$\tau(A_1)^\ominus$} (N2);
				\draw[-] (N2) --  (A);
				\draw[->] (B) -- (N3);
				\draw[->,white] (A) -- node[above] {\color{black} $\dotsc$} (B);
				\draw[->] (N3) -- node[above] {$\tau(A_n)^\ominus$} (N4);
				\draw[->] (N1) to[bend right=40] node[above] {$\tau(B)^\oplus$} (N4);
}}}}.
$$
In particular, the axiom $A \to A$ is converted into the hypergraph $\vcenter{\hbox{{\tikz[baseline=.1ex]{
				\node[node] (N1) {};
				\node[node, right=10mm of N1] (N2) {};
				\draw[->] (N1) -- node[above] {$\tau(A)^\ominus$} (N2);
				\draw[->] (N1) to[bend right=45] node[below] {$\tau(A)^\oplus$} (N2);
}}}}$

Following this representation, the rules of the Lambek calculus can be transformed into the following DPO rules (we omit the rules for $\backslash$ for brevity):
\begin{center}
	\begin{tabular}{cl}
		$
		\infer[]{\Gamma, B / A, \Pi, \Delta \to C}{\Gamma, B, \Delta \to C & \Pi \to A}
		$
		&
		$
		r^{B/A}_L=
		\left(
		{\color{red}
			\vcenter{\hbox{{\tikz[baseline=.1ex]{
							\node[node,label=below:{\scriptsize $(1)$}] (N1) {};
							\node[node,right=10mm of N1,label=below:{\scriptsize $(2)$}] (N2) {};
							\node[node,right=4mm of N2,label=below:{\scriptsize $(3)$}] (N3) {};
							\node[node, right=10mm of N3,label=below:{\scriptsize $(4)$}] (N4) {};
							\draw[->] (N1) -- node[above] {$\tau(B)^\ominus$} (N2);
							\draw[->] (N3) -- node[above] {$\tau(A)^\oplus$} (N4);
			}}}}
		}
		\middle|
		\vcenter{\hbox{{\tikz[baseline=.1ex]{
						\node[node,label=below:{\scriptsize $(1)$}] (N1) {};
						\node[node,right=13mm of N1,label=below:{\scriptsize $(3)$}] (N2) {};
						\node[node, right=5.5mm of N2,label=below:{\scriptsize $(4)$},label=above:{\scriptsize $(2)$}] (N3) {};
						\draw[->] (N1) -- node[above] {$\tau(B/A)^\ominus$} (N2);
		}}}}
		\right)
		$
		\\
		$
		\infer[]{\Pi \to B / A}{\Pi, A \to B}
		$
		&
		$
		r^{B/A}_R=\left(
		{\color{purple}
		\vcenter{\hbox{{\tikz[baseline=.1ex]{
						\node[node,label=below:{\scriptsize $(1)$}] (N1) {};
						\node[node,right=12mm of N1] (N2) {};
						\node[node, right=12mm of N2,label=below:{\scriptsize $(2)$}] (N3) {};
						\draw[->] (N1) -- node[above] {$\tau(B)^\oplus$} (N2);
						\draw[<-] (N2) -- node[above] {$\tau(A)^\ominus$} (N3);
		}}}}
		}
		\middle|
		\vcenter{\hbox{{\tikz[baseline=.1ex]{
						\node[node,label=below:{\scriptsize $(1)$}] (N1) {};
						\node[node,right=15mm of N1,label=below:{\scriptsize $(2)$}] (N2) {};
						\draw[->] (N1) -- node[above] {$\tau(B/A)^\oplus$} (N2);
		}}}}
		\right)
		$
		\\
	%
		$
		\infer[]{\Gamma, A \cdot B, \Delta \to C}{\Gamma, A, B, \Delta \to C}
		$
		&
		$
		r^{A\cdot B}_L=\left(
		{\color{violet}
		\vcenter{\hbox{{\tikz[baseline=.1ex]{
						\node[node,label=below:{\scriptsize $(1)$}] (N1) {};
						\node[node,right=12mm of N1] (N2) {};
						\node[node, right=12mm of N2,label=below:{\scriptsize $(2)$}] (N3) {};
						\draw[->] (N1) -- node[above] {$\tau(A)^\ominus$} (N2);
						\draw[->] (N2) -- node[above] {$\tau(B)^\ominus$} (N3);
		}}}}
		}
		\middle|
		\vcenter{\hbox{{\tikz[baseline=.1ex]{
						\node[node,label=below:{\scriptsize $(1)$}] (N1) {};
						\node[node,right=15mm of N1,label=below:{\scriptsize $(2)$}] (N2) {};
						\draw[->] (N1) -- node[above] {$\tau(A\cdot B)^\ominus$} (N2);
		}}}}
		\right)
		$
		\\
		%
		$
		\infer[]{\Pi, \Psi \to A \cdot B}{\Pi \to A & \Psi \to B}
		$
		&
		$
		r^{A\cdot B}_R=\left(
		{\color{blue}
		\vcenter{\hbox{{\tikz[baseline=.1ex]{
						\node[node,label=below:{\scriptsize $(1)$}] (N1) {};
						\node[node,right=10mm of N1,label=below:{\scriptsize $(2)$}] (N2) {};
						\node[node,right=4mm of N2,label=below:{\scriptsize $(3)$}] (N3) {};
						\node[node, right=10mm of N3,label=below:{\scriptsize $(4)$}] (N4) {};
						\draw[->] (N1) -- node[above] {$\tau(A)^\oplus$} (N2);
						\draw[->] (N3) -- node[above] {$\tau(B)^\oplus$} (N4);
		}}}}
		}
		\middle|
		\vcenter{\hbox{{\tikz[baseline=.1ex]{
						\node[node,label=below:{\scriptsize $(1)$}] (N1) {};
						\node[node,right=13mm of N1,label=below:{\scriptsize $(4)$}] (N2) {};
						\node[node, right=5.5mm of N2,label=below:{\scriptsize $(3)$},label=above:{\scriptsize $(2)$}] (N3) {};
						\draw[->] (N1) -- node[above] {$\tau(A \cdot B)^\oplus$} (N2);
		}}}}
		\right)
		$
		\\
	\end{tabular}
\end{center}
\begin{example}\label{ex_from_LC_to_DPO}
	The following derivation starts with the disjoint union of several axioms, and it models the derivation from Example \ref{ex_der_L_2}:
	$$
	\vcenter{\hbox{{\tikz[baseline=.1ex]{
					\node[node] (N1) {};
					\node[node, right=10mm of N1] (N2) {};
					\draw[->] (N1) -- node[above] {$p^\ominus$} (N2);
					\draw[->,blue] (N1) to[bend right=35] node[below] {$p^\oplus$} (N2);
	}}}}
	\quad
	\vcenter{\hbox{{\tikz[baseline=.1ex]{
					\node[node] (N1) {};
					\node[node, right=10mm of N1] (N2) {};
					\draw[->] (N1) -- node[above] {$r^\ominus$} (N2);
					\draw[->,blue] (N1) to[bend right=35] node[below] {$r^\oplus$} (N2);
	}}}}
	\quad
	\vcenter{\hbox{{\tikz[baseline=.1ex]{
					\node[node] (N1) {};
					\node[node, right=10mm of N1] (N2) {};
					\draw[->] (N1) -- node[above] {$q^\ominus$} (N2);
					\draw[->] (N1) to[bend right=35] node[below] {$q^\oplus$} (N2);
	}}}}
	\;\Rightarrow_{r^{p\cdot r}_R}\;
	\vcenter{\hbox{{\tikz[baseline=.1ex]{
					\node[node] (N1) {};
					\node[node,right=10mm of N1] (N2) {};
					\node[node,right=10mm of N2] (N3) {};
					\draw[->,red] (N1) -- node[above] {$p^\ominus$} (N2);
					\draw[->] (N2) -- node[above] {$r^\ominus$} (N3);
					\draw[->] (N1) to[bend right=60] node[above] {$\tau(p\cdot r)^\oplus$} (N3);
	}}}}
	\quad
	\vcenter{\hbox{{\tikz[baseline=.1ex]{
					\node[node] (N1) {};
					\node[node, right=10mm of N1] (N2) {};
					\draw[->] (N1) -- node[above] {$q^\ominus$} (N2);
					\draw[->,red] (N1) to[bend right=35] node[below] {$q^\oplus$} (N2);
	}}}}
	\;\Rightarrow_{r^{p/q}_L}\;
	$$
	$$
	\vcenter{\hbox{{\tikz[baseline=.1ex]{
					\node[node] (N1) {};
					\node[node,right=12mm of N1] (N2) {};
					\node[node,right=5mm of N2,color=violet] (N3) {};
					\node[node,right=5mm of N3] (N4) {};
					\draw[->] (N1) -- node[above] {$\tau(p/q)^\ominus$} (N2);
					\draw[->,violet] (N2) -- node[above] {$q^\ominus$} (N3);
					\draw[->,violet] (N3) -- node[above] {$r^\ominus$} (N4);
					\draw[->] (N1) to[bend right=32] node[below] {$\tau(p\cdot r)^\oplus$} (N4);
	}}}}
	\;\Rightarrow_{r^{q\cdot r}_L}\;
	\vcenter{\hbox{{\tikz[baseline=.1ex]{
					\node[node] (N1) {};
					\node[node,right=12mm of N1] (N2) {};
					\node[node,right=12mm of N2,color=purple] (N3) {};
					\draw[->] (N1) -- node[above] {$\tau(p/q)^\ominus$} (N2);
					\draw[->,purple] (N2) -- node[above] {$\tau(q\cdot r)^\ominus$} (N3);
					\draw[->,purple] (N1) to[bend right=32] node[below] {$\tau(p\cdot r)^\oplus$} (N3);
	}}}}
	\;\Rightarrow_{r^{(p\cdot r)/(q\cdot r)}_R}\;
	\vcenter{\hbox{{\tikz[baseline=.1ex]{
					\node[node] (N1) {};
					\node[node, right=14.5mm of N1] (N2) {};
					\draw[->] (N1) -- node[above] {$\tau(p/q)^\ominus$} (N2);
					\draw[->] (N1) to[bend right=32] node[below] {$\tau((p\cdot r)/(q\cdot r))^\oplus$} (N2);
	}}}}
	$$
\end{example}

\begin{remark}\label{rem_problem_disconnected}
	In general, a problem of the above representation of sequents arises if antecedents of sequents are disconnected. Indeed, if we model an $\HL$ derivation in the way shown above, then all the sequents participating in the derivation including intermediate ones ``float'' in the same space. Now, if antecedents of several sequents have more than one connected component, then one might confuse, which one belongs to which sequent. For example, if we tried to generalize the above representation to some other cases, then the axiom sequent $p \to p$ would be represented as follows for $\rk(p)=0$: 
	$
	\vcenter{\hbox{\tikz[baseline=.1ex]{
		\node[hyperedge] (E) {$p^\ominus$};
	}}}
	\;\; 
	\vcenter{\hbox{\tikz[baseline=.1ex]{
		\node[hyperedge] (E) {$p^\oplus$};
	}}}
	$.
	Now consider two axiom sequents $p \to p$ and $q \to q$ in this representation together and assume that a rule similar to $r^{q/q}_L$ is applied:
	$$
	\vcenter{\hbox{\tikz[baseline=.1ex]{
				\node[hyperedge] (E) {$p^\ominus$};
	}}}
	\;\;
	\vcenter{\hbox{\tikz[baseline=.1ex]{
				\node[hyperedge] (E) {$p^\oplus$};
	}}}
	\;\;
	\vcenter{\hbox{\tikz[baseline=.1ex]{
				\node[hyperedge] (E) {$q^\ominus$};
	}}}
	\;\;
	\vcenter{\hbox{\tikz[baseline=.1ex]{
				\node[hyperedge] (E) {$q^\oplus$};
	}}}
	\;\Rightarrow\;
	\vcenter{\hbox{\tikz[baseline=.1ex]{
				\node[hyperedge] (E) {$p^\ominus$};
	}}}
	\;\;
	\vcenter{\hbox{\tikz[baseline=.1ex]{
				\node[hyperedge] (E) {$p^\oplus$};
	}}}
	\;\;
	\vcenter{\hbox{\tikz[baseline=.1ex]{
				\node[hyperedge] (E) {$\,\tau^\prime(q/q)^\ominus\,$};
	}}}
	$$
	Here $\tau^\prime(q/q)^\ominus$ should be defined as $q\div\left(\vcenter{\hbox{\tikz[baseline=.1ex]{
			\node[hyperedge] (E1) {$\$_0$};
			\node[hyperedge,right=2mm of E1] (E2) {$q$};
	}}}\right)$. But, finally, if we try to construct a sequent from the resulting hypergraph, then we obtain a non-derivable one: 
	$
	\vcenter{\hbox{\tikz[baseline=.1ex]{
		\node[hyperedge] (E) {$p$};
	}}}
	\;\;
	\vcenter{\hbox{\tikz[baseline=.1ex]{
		\node[hyperedge] (E) {$\,\tau^\prime(q/q)\,$};
	}}}
	\;\to\;
	p
	$. To fix this, we have to ``tie'' all the connected components belonging to the same sequent and to keep control of them.
\end{remark}

The general construction encounters the above remark. From now on, we assume that we are given an $\HL^\ast$-grammar $Gr=\langle \mathcal{T}, S, \triangleright\rangle$ and we construct a lin-DPO grammar $\mathrm{LDPO}(Gr)=\langle \mathcal{N},\mathcal{T},\mathcal{P},Z\rangle\restriction_C$ equivalent to it. If the readers would like to skip the construction description, then they can go to Lemma \ref{lemma_main}.

\textbf{Nonterminal symbols $\mathcal{N}$.} Let $Tp_{Gr} \eqdef \{T \mid \exists a \in \mathcal{T}: a \triangleright T\} \cup \{S\}$; types from this set and their subtypes are involved in derivations in $Gr$, so let us consider the set $ST_{Gr}$ of all subtypes of types from $Tp_{Gr}$. Then, let:
\begin{equation*}
\mathcal{N} = \{T^\ominus,T^\oplus \mid T \in ST_{Gr}, \rk(T^\ominus)=\rk(T^\oplus)=\rk(T)+1\} \sqcup \{\beta,\kappa,\varphi,q_1,q_2,q_3\}
\end{equation*}
Here $\beta,\kappa,\varphi,q_1,q_2,q_3$ are fresh nonterminal symbols such that $\rk(\beta)=1$, $\rk(\kappa)=2$, $\rk(\varphi) = \rk(S)$, and $\rk(q_i)=0$ for $i=1,2,3$. These symbols are used to keep control of derivation processes in the DPO grammar we construct. Namely, in the construction of $\mathrm{LDPO}(Gr)$ we are going to:
\begin{enumerate}
	\item distinguish between antecedental and succedental types using two kinds of labels $T^\ominus$ and $T^\oplus$ as in Example \ref{ex_from_LC_to_DPO};
	\item add an additional attachment node $v_\beta$ to each hyperedge of a sequent, which is represented as a hypergraph; in other words, we add a ``tentacle'' to each hyperedge and attach all these tentacles to $v_\beta$ (this allows one to keep all connected components of a sequent together);
	\item add binary edges going from all nodes of a sequent to $v_\beta$ labeled by $\kappa$ (``connector'') (this is important to control isolated nodes);
	\item add a unary hyperedge attached to $v_\beta$ and labeled by $\beta$ ($\beta$ stands for ``bouquet''; we need it to avoid cases described in Remark \ref{rem_problem_disconnected}).
	\item divide the derivation into 3 stages and use $q_i^\bullet$ to denote the $i$-th stage.
\end{enumerate} 

\textbf{The start hypergraph $Z$:} $Z=\varphi^\bullet + q_1^\bullet$. Here $\varphi^\bullet$ is the finishing hyperedge, it will be used at the last stage of a derivation.

\textbf{Rules $\mathcal{P}$.} 
In graph theory, it is much easier to explain by drawing rather than by writing. This is the reason why formal definitions of the rules of $\mathrm{LDPO}(Gr)$ are cumbersome. After presenting them we immediately proceed with examples.
\\
\textbf{First group $\mathcal{P}_1$.} At the first stage, axioms of $\HL$ represented as hypergraphs enter the game: if $A \in ST_{Gr}$ and $\rk(A)=n$, then $r^A_{Ax} \eqdef (q_1^\bullet | q_1^\bullet+Ax^A) \in \mathcal{P}_1$ where $Ax^A$ is the \textit{axiom hypergraph} defined as follows:
\begin{itemize}
	\item $V_{Ax^A}=\left\{ v_1,\dotsc,v_{n},v_\beta \right\}$; \qquad $E_{Ax^A}=\left\{ e^v_i \mid 1 \le i  \le n \right\} \sqcup \{e^\ominus,e^\oplus,e_\beta\}$;
	\item $att_{Ax^A}(e^v_i) = v_iv_\beta$, $att_{Ax^A}(e^\ominus)=att_{Ax^A}(e^\oplus)=v_1\dotsc v_{n} v_\beta$, $att_{Ax^A}(e_\beta) = v_\beta$; 
	\item $lab_{Ax^A}(e^v_i) = \kappa$, \; $lab_{Ax^A}(e^\ominus)=A^\ominus$, \; $lab_{Ax^A}(e^\oplus)=A^\oplus$, \; $lab_{Ax^A}(e_\beta) = \beta$; 
	\item $ext_{Ax^A}=\Lambda$.
\end{itemize}
\noindent
\textbf{Second group $\mathcal{P}_2$.}
At the second stage, the derivation in $\HL^\ast$ is modeled using DPO rules. First of all, let us define a hypergraph $G=Sep(T_1^{\circledcirc_1},\dotsc,T_m^{\circledcirc_m})$ where $T_i \in ST_{Gr}$ and $\circledcirc_i \in \{\oplus, \ominus\}$ as follows:
\begin{itemize}
	\item $V_{G}=\bigsqcup\limits_{i=1}^m \left\{ v^i_1,\dotsc,v^i_{\rk(T_i)},v^i_\beta \right\}$; \quad $E_{G}=\bigsqcup\limits_{i=1}^m \left\{ e^i_1,\dotsc,e^i_{\rk(T_i)},e^i_\beta,e^i \right\}$;
	\item $att_{G}(e^i_j) = v^i_j v^i_\beta$, \quad $att_{G}(e^i_\beta)=v^i_\beta$, \quad $att_{G}(e^i)=v^i_1\dotsc v^i_{\rk(T_i)} v^i_\beta$;
	\item $lab_{G}(e^i_j) = \kappa$, \quad $lab_{G}(e^i_\beta)=\beta$, \quad $lab_{G}(e^i)=T_i^{\circledcirc_i}$;
	\item $ext_{G} = att_{G}(e^1)\dotsc att_{G}(e^m)$.
\end{itemize}
For the sake of uniformity, $Sep()$ is the empty hypergraph.

Now we are ready to introduce four kinds of rules corresponding to the inference rules of $\HL$. Let $T=N \div D$ belong to $ST_{Gr}$ and $E_D=\{e^\$_D,d_1,\dotsc,d_k\}$. 
\\
\textbf{\textit{The rule $r^{T}_L$:}} $r^T_L=(q_2^\bullet+G^T_L | q_2^\bullet+H^T_L) \in \mathcal{P}_2$ where 
\begin{itemize}
	\item $G^T_L = Sep(N^\ominus,lab_D(d_1)^\oplus,\dotsc,lab_D(d_k)^\oplus)$.
	\item $V_{H^T_L} = V_D \sqcup \{v_\beta\}$; \quad $E_{H^T_L} = \left\{ e_v \mid v \in V_D \right\} \sqcup \left\{ e , e_\beta\right\}$; 
	\item $att_{H^T_L}(e_v) = vv_\beta$,\quad $att_{H^T_L}(e) = att_D(e^\$_D) v_\beta$, \quad $att_{H^T_L}(e_\beta) = v_\beta$; 
	\item $lab_{H^T_L}(e_v) = \kappa$, \quad $lab_{H^T_L}(e) = (N \div D)^\ominus$, \quad $lab_{H^T_L}(e_\beta) = \beta$; 
	\item $ext_{H^T_L} = ext_D v_\beta att_{D}(d_1)v_\beta\dotsc att_{D}(d_k) v_\beta$.
\end{itemize}

\noindent
\textbf{\textit{The rule $r^{T}_R$:}} $r^T_R=(q_2^\bullet+G^T_R | q_2^\bullet+H^T_R) \in \mathcal{P}_2$ where 
\begin{itemize}
	\item $V_{G^T_R} = V_D \sqcup \{v_\beta\}$; \quad $E_{G^T_R} = E_D \sqcup \left\{ e_v \mid v \in V_D \right\} \sqcup \left\{ e_N, e_\beta \right\}$; 
	\item $att_{G^T_R}(d_i) = att_D(d_i) v_\beta$, \quad $att_{G^T_R}(e_N) = ext_D v_\beta$, \quad $att_{G^T_R}(e_v) = vv_\beta$, \\ $att_{G^T_R}(e_\beta) = v_\beta$; \qquad $lab_{G^T_R}(d_i) = lab_D(d_i)^\ominus$, \quad $lab_{G^T_R}(e_N) = N^\oplus$, \\ $lab_{G^T_R}(e_v) = \kappa$, \quad $lab_{G^T_R}(e_\beta) = \beta$; \qquad $ext_{G^T_R} = lab_D(e^\$_D)v_\beta$.
	\item $H^T_R = Sep((N\div D)^\oplus)$.
\end{itemize}
\noindent
Let $U= \times(M)$ belong to $ST_{Gr}$ and let $E_M=\{m_1,\dotsc,m_l\}$. 
\\
\textbf{\textit{The rule $r^{U}_L$:}} $r^U_L=(q_2^\bullet+G^U_L | q_2^\bullet+H^U_L) \in \mathcal{P}_2$ where 
\begin{itemize}
	\item $V_{G^U_L} = V_M \sqcup \{v_\beta\}$; \quad $E_{G^U_L} = E_M \sqcup \left\{ e_v \mid v \in V_M \right\} \sqcup \left\{ e_\beta \right\}$; 
	\item $att_{G^U_L}(m_i) = att_M(m_i) v_\beta$, \quad $att_{G^U_L}(e_v) = vv_\beta$, \quad $att_{G^U_L}(e_\beta) = v_\beta$; 
	\item $lab_{G^U_L}(m_i) = lab_M(m_i)^\ominus$, $lab_{G^U_L}(e_v) = \kappa$, $lab_{G^U_L}(e_\beta) = \beta$; $ext_{G^U_L} = ext_M v_\beta$.
	\item $H^U_L = Sep((\times(M))^\ominus)$.
\end{itemize}\noindent
\textbf{\textit{The rule $r^{U}_R$:}} $r^U_R=(q_2^\bullet+G^U_R | q_2^\bullet+H^U_R) \in \mathcal{P}_2$ where 
\begin{itemize}
	\item $G^U_R = Sep(lab_M(m_1)^\oplus,\dotsc,lab_M(m_l)^\oplus)$.
	\item $V_{H^U_R} = V_M \sqcup \{v_\beta\}$; \quad $E_{H^U_R} = \left\{ e_v \mid v \in V_M \right\} \sqcup \left\{e, e_\beta\right\}$; 
	\item $att_{H^U_R}(e_v) = vv_\beta$, \quad $att_{H^U_R}(e) = ext_M v_\beta$, \quad $att_{H^U_R}(e_\beta) = v_\beta$; 
	\item $lab_{H^U_R}(e_v) = \kappa$, \quad $lab_{H^U_R}(e) = (\times(M))^\oplus$, \quad $lab_{H^U_R}(e_\beta) = \beta$; 
	\item $ext_{H^U_R} = att_{M}(m_1)v_\beta\dotsc att_{M}(m_l) v_\beta$.
\end{itemize}
\noindent
Finally, the set $\mathcal{P}$ consists of the following rules:
\begin{enumerate}
	\item Rules from $\mathcal{P}_1$ and $\mathcal{P}_2$;
	\item $(q_1^\bullet|q_{2}^\bullet)$, $(q_2^\bullet|q_{3}^\bullet)$,
	\quad
	$\left(q_3^\bullet+\kappa^\bullet\middle|q_3^\bullet+\left(\vcenter{\hbox{{\tikz[baseline=.1ex]{
					\node[node,label=left:{\scriptsize $(1)$}] (N1) {};
					\node[node, right=3mm of N1,label=right:{\scriptsize $(2)$}] (N2) {};
	}}}}\right)\right)$;
	\item $(q_3^\bullet + (A^\ominus)^\bullet| q_3^\bullet + a^{\bullet_1})$ where $a \in \mathcal{T}$ such that $a \triangleright A$, and $a^{\bullet_1}$ is obtained from $a^\bullet$ by adding an isolated node as the $(\rk(a)+1)$-st external node;
	\item $(q_3^\bullet + F_L| F_R)$ where
	\begin{itemize}
		\item $V_{F_L} = \{v_1,\dotsc,v_{2\rk(S)},v_\beta\}$; \quad $E_{F_L} = \{e_1,e_2,e_\beta\}$;
		\item $att_{F_L}(e_1)=v_1\dotsc v_{\rk(S)}$, $att_{F_L}(e_2)=v_{\rk(S)+1}\dotsc v_{2\rk(S)}v_\beta$;
		\item $lab_{F_L}(e_1)=\varphi$, $lab_{F_L}(e_2)=S^\oplus$; $ext_{F_L} = att_{F_L}(e_1)att_{F_L}(e_2)$.
		\item $V_{F_R} = \{u_1,\dotsc,u_{\rk(S)}\}$; $E_{F_L} = \emptyset$; $ext_{F_R} = u_1\dotsc u_{\rk(S)} u_1\dotsc u_{\rk(S)}$.
	\end{itemize}
\end{enumerate}
To choose a constant $C$, let us define \emph{the full size $||T||$} of a type $T$ as follows: 
\begin{enumerate}
	\item $||p||=\rk(p)+1$ for $p \in Pr$;
	\item $||N \div D|| = ||N||+\sum_{e \in E_D}||lab_D(e)||+|V_D|+\rk(e^\$_D)+1$;
	\item $||\times(M)|| = \sum_{e \in E_M}||lab_M(e)||+|V_M|+|ext_M|+1$. 
\end{enumerate}
Informally, $||T||$ encounters the total number of primitive types, connectives and nodes within $T$. Let $k \eqdef \max_{T \in Tp_{Gr}} ||T||$. The main observation is as follows:
\begin{lemma}\label{lemma_size_der}
	If $\HL^\ast \vdash H \to S$, then the total number of axioms and rule applications in its derivation as well as the total number of nodes in $H$ does not exceed $||lab_H(h_1)||+\dotsc+||lab_H(h_m)|| + ||S|| \le k\left(|E_H|+1\right)$ where $E_H=\{h_1,\dotsc,h_m\}$. 
\end{lemma}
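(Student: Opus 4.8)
The plan is to prove both bounds simultaneously by induction on the derivation $\pi$ of $H\to S$ in $\HL^\ast$, using the full size to define a measure on sequents: for a sequent $K\to B$ put $\mu(K\to B)\eqdef\sum_{h\in E_K}||lab_K(h)||+||B||$. (I fix the convention that in clause~(2) of the definition of $||\cdot||$ the summand $\rk(e^\$_D)+1$ accounts for the $\$$-labeled hyperedge $e^\$_D$, while $\sum_{e\in E_D}||lab_D(e)||$ ranges over the type-labeled hyperedges $d_1,\dots,d_k$.) Writing $N(\pi)$ for the number of axioms plus rule applications in $\pi$, and $\mu(\rho)$ for the measure of the endsequent of a subderivation $\rho$, the two statements to prove are $N(\pi)\le\mu(H\to S)$ and $|V_H|\le\mu(H\to S)$. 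The numeric estimate then follows at the very end, since $S$ and every label $lab_H(h_i)$ lie in $Tp_{Gr}$ and hence have full size at most $k$.

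The arithmetic heart of the argument is a single per-rule identity: for each rule of $\HL^\ast$, $\mu$ of the conclusion equals the sum of $\mu$ over the premises plus a strictly positive \emph{slack} — namely $|V_D|+\rk(e^\$_D)+1$ for $(\div L)$ and $(\div R)$, and $|V_M|+|ext_M|+1$ for $(\times L)$ and $(\times R)$. This is a routine bookkeeping computation: substituting a hyperedge by a handle leaves the labels of the surrounding hypergraph untouched and contributes exactly $||N||$ (resp.\ $||\times(M)||$) to $\mu$, while clauses~(2) and~(3) of $||\cdot||$ were designed precisely so that $||N\div D||$ and $||\times(M)||$ absorb the labels of $D$ (resp.\ $M$) together with $|V_D|$ (resp.\ $|V_M|+|ext_M|$). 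Since each slack is at least $1$, the first bound is immediate: $N(\pi)=1+\sum_j N(\pi_j)\le 1+\sum_j\mu(\pi_j)\le\mu(H\to S)$, the base case being the axiom $A^\bullet\to A$ with $N=1\le 2||A||=\mu$.

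For the node bound I would track how the replacement in each rule changes the antecedent's node count, using throughout that node identifications during a replacement can only decrease the count and that substituting a handle introduces no internal node. The rules $(\times L)$ and $(\times R)$ are then easy: substituting a handle yields a conclusion antecedent with no more nodes than its premise, and in $(\times R)$ the conclusion antecedent has at most $|V_M|+\sum_i|V_{H_i}|$ nodes, whose surplus $|V_M|$ is dominated by the slack. The delicate cases, and the main obstacle, are $(\div R)$ and $(\div L)$: here the conclusion antecedent may carry \emph{strictly more} nodes than any premise, because the denominator $D$ contributes the whole of $V_D$, so monotonicity of node counts is unavailable. For $(\div R)$ I would note that the premise antecedent $D[e^\$_D/F]$ retains all internal nodes of $F$ as distinct nodes, so $|V_{D[e^\$_D/F]}|\ge|V_F|-\rk(e^\$_D)$, whence $|V_F|\le\mu(\pi_0)+\rk(e^\$_D)<\mu(H\to S)$ by the slack; for $(\div L)$ I would bound the conclusion antecedent crudely by $|V_H|+|V_D|+\sum_i|V_{H_i}|$, use that $|V_H|$ equals the node count of the premise $H[e/N^\bullet]$ (again a handle) and that each $|V_{H_i}|$ is controlled inductively, and check that the slack $|V_D|+\rk(e^\$_D)+1$ covers the surplus $|V_D|$. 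The only genuine care needed is to keep external-node repetitions under control, but since they can only merge nodes they only help. The axiom base uses $|V_{A^\bullet}|=\rk(A)<||A||\le\mu$, noting $||A||\ge\rk(A)+1$.

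Finally, to pass from $\mu(H\to S)$ to the stated estimate, I would combine the two inductive bounds with $||S||\le k$ and $||lab_H(h_i)||\le k$, valid because $S$ and all labels of $H$ belong to $Tp_{Gr}$ and $k=\max_{T\in Tp_{Gr}}||T||$; summing over the $m=|E_H|$ hyperedges gives $\mu(H\to S)=\sum_{i=1}^m||lab_H(h_i)||+||S||\le k(m+1)=k(|E_H|+1)$, which completes the proof.
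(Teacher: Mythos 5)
Your proof is correct and takes the same route the paper intends: the paper's entire proof is the one-line remark that the claim follows by induction on the derivation, checking axioms and that each rule preserves the bound, and your measure $\mu$ together with the per-rule slack computation ($|V_D|+\rk(e^\$_D)+1$ for the division rules, $|V_M|+|ext_M|+1$ for the product rules) is exactly the bookkeeping that remark leaves implicit; your convention for reading clause~(2) of $||\cdot||$ is also the only sensible one.

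One point deserves attention. You prove the two bounds $N(\pi)\le\mu$ and $|V_H|\le\mu$ \emph{separately}, which matches the literal wording of the lemma, but in the proof of Lemma~\ref{lemma_main} the statement is invoked in the summed form $K_1+K_2+|V_H|\le k(|E_H|+1)$; your separate bounds only yield $2k(|E_H|+1)$ for that sum, which would break the final comparison with $C=2k+3$. Fortunately your own slack computation already delivers the combined bound: in every rule the slack dominates the \emph{total} increment, namely the $+1$ for the rule application plus the newly introduced nodes (at most $|V_D|$ for $(\div L)$, at most $\rk(e^\$_D)$ for $(\div R)$, at most $|V_M|$ for $(\times R)$, none for $(\times L)$), and the axiom case gives $1+\rk(A)\le 2||A||$. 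So you should run the induction once on the single quantity $N(\pi)+|V_H|$ rather than twice; no new ideas are needed, only this repackaging.
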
 
It is straightforwardly proved by induction on the length of a derivation: we just need to check that this holds for axioms and that each inference rule preserves this property.
Finally, let $C=2k+3$. This completes the definition of $\mathrm{LDPO}(Gr) = \langle \mathcal{N}, \mathcal{T}, \mathcal{P}, Z\rangle \restriction_C$. 

\begin{example}
	If $A_0,A_2$ are types such that $\rk(A_0)=0$, $\rk(A_2)=2$, then:
	$$
	r^{A_0}_{Ax} = 
	\left(\vcenter{\hbox{{\tikz[baseline=.1ex]{\node[hyperedge] {$q_1$};}}}}
	\;\;\middle|\;\;
	\vcenter{\hbox{{\tikz[baseline=.1ex]{
					\node[node] (N) {};
					\node[hyperedge,left=3mm of N] (E1) {$A_0^\ominus$};
					\node[hyperedge,right=3mm of N] (E2) {$A_0^\oplus$};
					\node[hyperedge,below=3mm of N] (E3) {$\beta$};
					\draw[-] (N) -- (E1);
					\draw[-] (N) -- (E2);
					\draw[-] (N) -- (E3);
	}}}}
	\quad
	\vcenter{\hbox{{\tikz[baseline=.1ex]{\node[hyperedge] {$q_1$};}}}}
	\right)
	;
	\quad
	r^{A_2}_{Ax} = 
	\left(\vcenter{\hbox{{\tikz[baseline=.1ex]{\node[hyperedge] {$q_1$};}}}}
	\;\middle|\;
	\vcenter{\hbox{{\tikz[baseline=.1ex]{
					\node[] (O) {};
					\node[node,left=12mm of O] (N1) {};
					\node[node,right=12mm of O] (N2) {};
					\node[hyperedge,above right=-1mm and 5.5mm of N1] (E1) {$A_2^\ominus$};
					\node[hyperedge,below left=-1mm and 5.5mm of N2] (E2) {$A_2^\oplus$};
					\node[node,below=6mm of O] (N3) {};
					\node[hyperedge,below=2.2mm of N3] (E3) {$\beta$};
					\draw[-] (N1) to[bend left=13] node[above] {\scriptsize 1} (E1);
					\draw[-] (E1) to[bend left=13] node[above right] {\scriptsize 2} (N2);
					\draw[-] (N2) to[bend left=13] node[below] {\scriptsize 2} (E2);
					\draw[-] (E2) to[bend left=13] node[below left] {\scriptsize 1} (N1);
					\draw[-] (E3) -- (N3);
					\draw[-] (E1) to[bend right=20] node[right] {\scriptsize 3} (N3);
					\draw[-] (E2) to[bend left=10] node[right] {\scriptsize 3} (N3);
					\draw[->] (N1) to[out=-90,in=180] node[below] {$\kappa$} (N3);
					\draw[->] (N2) to[out=-90,in=0] node[below] {$\kappa$} (N3);
	}}}}
	\;
	\vcenter{\hbox{{\tikz[baseline=.1ex]{\node[hyperedge] {$q_1$};}}}}
	\right).
	$$
	Hereinafter we omit the subscript {\scriptsize 1} if a line connects a hyperedge of rank 1 and its only attachment node.
\end{example}

\begin{example}
	Let $T_\rho=\times(L_\rho)$ where $L_\rho$ is the left-hand side of the rule $\rho$ from Example \ref{ex_dpo_rule} (see also Example \ref{ex_type_der_from_DPO_to_HL}). Then:
	$$
	r^{T_\rho}_L = \left(
	\vcenter{\hbox{{\tikz[baseline=.1ex]{
					\node[node,label=above:{\scriptsize $(1)$}] (O1) {};
					\node[below=3.7mm of O1] (O2) {};
					\node[node,below=3.7mm of O2, label=below right :{\scriptsize $(4)$}] (N) {};
					\node[node,left=12mm of N, label=left:{\scriptsize $(2)$}] (N1) {};
					\node[node,right=12mm of N, label=right:{\scriptsize $(3)$}] (N2) {};
					\node[hyperedge,left=3.2mm of O2] (E1) {$l^\ominus$};
					\node[hyperedge,right=3.2mm of O2] (E2) {$r^\ominus$};
					\node[hyperedge,below =3.8mm of N] (E3) {$\beta$};
					\draw[-] (O1) -- node[left] {\scriptsize 1} (E1);
					\draw[-] (O1) -- node[right] {\scriptsize 1} (E2);
					\draw[-] (E1) -- node[left] {\scriptsize 2} (N1);
					\draw[-] (E2) -- node[right] {\scriptsize 2} (N2);
					\draw[-] (E1) -- node[left] {\scriptsize 3} (N);
					\draw[-] (E2) -- node[right] {\scriptsize 3} (N);
					\draw[->] (O1) -- node[left] {$\kappa$} (N);
					\draw[->] (N1) -- node[below left] {$\kappa$} (N);
					\draw[->] (N2) -- node[below right] {$\kappa$} (N);
					\draw[-] (E3) -- (N);
	}}}}
	\;
	\vcenter{\hbox{{\tikz[baseline=.1ex]{\node[hyperedge] {$q_2$};}}}}
	\;\;
	\middle|
	\vcenter{\hbox{{\tikz[baseline=.1ex]{
					\node[hyperedge] (E1) {$T_\rho^\ominus$};
					\node[node,above=4mm of E1, label=right:{\scriptsize $(1)$}] (N1) {};
					\node[node,left=6mm of E1, label=left:{\scriptsize $(2)$}] (N2) {};
					\node[node,right=6mm of E1, label=right:{\scriptsize $(3)$}] (N3) {};
					\node[node,below=4mm of E1, label=right:{\scriptsize $(4)$}] (N4) {};
					\node[hyperedge,below =3mm of N4] (E2) {$\beta$};
					\draw[-] (E1) -- node[left] {\scriptsize 1} (N1);
					\draw[-] (E1) -- node[above] {\scriptsize 2} (N2);
					\draw[-] (E1) -- node[above] {\scriptsize 3} (N3);
					\draw[-] (E1) -- node[right] {\scriptsize 4} (N4);
					\draw[-] (N4) -- (E2);
					\draw[->] (N2) -- node[left] {$\kappa$} (N4);
					\draw[->] (N3) -- node[right] {$\kappa$} (N4);
					\draw[->] (N1) 
					to[out=180,in=90] (-17mm,0mm) node[left] {$\kappa$}
					to[out=270,in=180] (N4)
					;
	}}}}
	\;
	\vcenter{\hbox{{\tikz[baseline=.1ex]{\node[hyperedge] {$q_2$};}}}}
	\right)
	$$
	$$
	r^{T_\rho}_R = \left(
	\vcenter{\hbox{{\tikz[baseline=.1ex]{
					\node[hyperedge] (E1) {$l^\oplus$};
					\node[node,left=6mm of E1, label=above:{\scriptsize $(1)$}] (N2) {};
					\node[node,right=6mm of E1, label=above:{\scriptsize $(2)$}] (N3) {};
					\node[node,below=4mm of E1, label=right:{\scriptsize $(3)$}] (N4) {};
					\node[hyperedge,below =3mm of N4] (E2) {$\beta$};
					\draw[-] (E1) -- node[above] {\scriptsize 1} (N2);
					\draw[-] (E1) -- node[above] {\scriptsize 2} (N3);
					\draw[-] (E1) -- node[right] {\scriptsize 3} (N4);
					\draw[-] (N4) -- (E2);
					\draw[->] (N2) -- node[left] {$\kappa$} (N4);
					\draw[->] (N3) -- node[right] {$\kappa$} (N4);
	}}}}
	\;
	\vcenter{\hbox{{\tikz[baseline=.1ex]{
					\node[hyperedge] (E1) {$r^\oplus$};
					\node[node,left=6mm of E1, label=above:{\scriptsize $(4)$}] (N2) {};
					\node[node,right=6mm of E1, label=above:{\scriptsize $(5)$}] (N3) {};
					\node[node,below=4mm of E1, label=right:{\scriptsize $(6)$}] (N4) {};
					\node[hyperedge,below =3mm of N4] (E2) {$\beta$};
					\draw[-] (E1) -- node[above] {\scriptsize 1} (N2);
					\draw[-] (E1) -- node[above] {\scriptsize 2} (N3);
					\draw[-] (E1) -- node[right] {\scriptsize 3} (N4);
					\draw[-] (N4) -- (E2);
					\draw[->] (N2) -- node[left] {$\kappa$} (N4);
					\draw[->] (N3) -- node[right] {$\kappa$} (N4);
	}}}}
	\;
	\vcenter{\hbox{{\tikz[baseline=.1ex]{\node[hyperedge] {$q_2$};}}}}
	\;\;
	\middle|
	\vcenter{\hbox{{\tikz[baseline=.1ex]{
					\node[hyperedge] (E1) {$T_\rho^\oplus$};
					\node[node,above=3.5mm of E1, label=above:{\scriptsize $(1)\,(4)$}] (N1) {};
					\node[node,left=6mm of E1, label=above:{\scriptsize $(2)$}] (N2) {};
					\node[node,right=6mm of E1, label=above:{\scriptsize $(5)$}] (N3) {};
					\node[node,below=5mm of E1, label=right:{\scriptsize $(6)$}, label=below left:{\scriptsize $(3)$}] (N4) {};
					\node[hyperedge,below =3.5mm of N4] (E2) {$\beta$};
					\draw[-] (E1) -- node[right] {\scriptsize 1} (N1);
					\draw[-] (E1) -- node[above] {\scriptsize 2} (N2);
					\draw[-] (E1) -- node[above] {\scriptsize 3} (N3);
					\draw[-] (E1) -- node[right] {\scriptsize 4} (N4);
					\draw[-] (N4) -- (E2);
					\draw[->] (N2) -- node[left] {$\kappa$} (N4);
					\draw[->] (N3) -- node[right] {$\kappa$} (N4);
					\draw[->] (N1) 
					to[out=180,in=90] (-14mm,0mm) node[left] {$\kappa$}
					to[out=270,in=180] (N4)
					;
	}}}}
	\;
	\vcenter{\hbox{{\tikz[baseline=.1ex]{\node[hyperedge] {$q_2$};}}}}
	\right)
	$$
\end{example}

\begin{example}\label{ex_translation_HL_to_DPO}
	Let $U_\rho=\DPO(\rho) = T_\rho \div (R_\rho+\$_0^\bullet)$ from Example \ref{ex_type_der_from_DPO_to_HL}. Then:
	$$
	r^{U_\rho}_L = \left(
	\vcenter{\hbox{{\tikz[baseline=.1ex]{
					\node[hyperedge] (E1) {$T_\rho^\ominus$};
					\node[node,above=3.5mm of E1, label=right:{\scriptsize $(1)$}] (N1) {};
					\node[node,left=5mm of E1, label=above:{\scriptsize $(2)$}] (N2) {};
					\node[node,right=5mm of E1, label=above:{\scriptsize $(3)$}] (N3) {};
					\node[node,below=5mm of E1, label=right:{\scriptsize $(4)$}] (N4) {};
					\node[hyperedge,below =3mm of N4] (E2) {$\beta$};
					\draw[-] (E1) -- node[left] {\scriptsize 1} (N1);
					\draw[-] (E1) -- node[above] {\scriptsize 2} (N2);
					\draw[-] (E1) -- node[above] {\scriptsize 3} (N3);
					\draw[-] (E1) -- node[right] {\scriptsize 4} (N4);
					\draw[-] (N4) -- (E2);
					\draw[->] (N2) -- node[left] {$\kappa$} (N4);
					\draw[->] (N3) -- node[right] {$\kappa$} (N4);
					\draw[->] (N1) 
					to[out=180,in=90] (-13mm,0mm) node[left] {$\kappa$}
					to[out=270,in=180] (N4)
					;
					\node[hyperedge, right=16mm of E1] (E21) {$f^\oplus$};
					\node[node,left=4mm of E21, label=above:{\scriptsize $(5)$}] (N22) {};
					\node[node,below=5mm of E21, label=left:{\scriptsize $(6)$}] (N24) {};
					\node[hyperedge,below =3mm of N24] (E22) {$\beta$};
					\draw[-] (E21) -- node[above] {\scriptsize 1} (N22);
					\draw[-] (E21) -- node[right] {\scriptsize 2} (N24);
					\draw[-] (N24) -- (E22);
					\draw[->] (N22) -- node[left] {$\kappa$} (N24);
					\node[hyperedge, right=9mm of E21] (E31) {$t^\oplus$};
					\node[node,left=4mm of E31, label=above:{\scriptsize $(7)$}] (N32) {};
					\node[node,right=4mm of E31, label=above:{\scriptsize $(8)$}] (N33) {};
					\node[node,below=5mm of E31, label=right:{\scriptsize $(9)$}] (N34) {};
					\node[hyperedge,below =3mm of N34] (E32) {$\beta$};
					\draw[-] (E31) -- node[above] {\scriptsize 1} (N32);
					\draw[-] (E31) -- node[above] {\scriptsize 2} (N33);
					\draw[-] (E31) -- node[right] {\scriptsize 3} (N34);
					\draw[-] (N34) -- (E32);
					\draw[->] (N32) -- node[left] {$\kappa$} (N34);
					\draw[->] (N33) -- node[right] {$\kappa$} (N34);
	}}}}
	\vcenter{\hbox{{\tikz[baseline=.1ex]{\node[hyperedge] {$q_2$};}}}}
	\;
	\middle|
	\vcenter{\hbox{{\tikz[baseline=.1ex]{
					\node[] (O) {};
					\node[node, above left=6mm and 7mm of O, label=left:{\scriptsize $(5)$}] (N1) {};
					\node[node, above right=6mm and 7mm of O, label=right:{\scriptsize $(1)$}] (N2) {};
					\node[node, below left=6mm and 7mm of O, label=below:{\scriptsize $(7)$}, label=left:{\scriptsize $(2)$}] (N3) {};
					\node[node, below right=6mm and 7mm of O, label=below:{\scriptsize $(8)$}, label=right:{\scriptsize $(3)$}] (N4) {};
					\node[hyperedge,right=6.5mm of O] (E1) {$U_\rho^\ominus$};
					\node[node, label=above:{\scriptsize $(4)$}, label=below:{\scriptsize $(6)$}, label=right:{\scriptsize $(9)$}] (N5) {};
					\node[hyperedge,left =6mm of N5] (E2) {$\beta$};
					\node[hyperedge,right=1mm of E1] (E3) {$q_2$};
					\draw[->] (N1) to[bend right=15] node[above] {$\kappa$} (N5);
					\draw[->] (N2) to[bend right=15] node[above] {$\kappa$} (N5);
					\draw[->] (N3) to[bend right=15] node[below] {$\kappa$} (N5);
					\draw[->] (N4) to[bend right=15] node[below] {$\kappa$} (N5);
					\draw[-] (E1) to[bend right=26] (N5);
					\draw[-] (N5) -- (E2);
	}}}}
	\right)
	$$
	$$
	r^{U_\rho}_R = \left(
	\vcenter{\hbox{{\tikz[baseline=.1ex]{
					\node[] (O) {};
					\node[node, above left=6mm and 12mm of O] (N1) {};
					\node[node, above right=6mm and 12mm of O] (N2) {};
					\node[node, below left=6mm and 12mm of O] (N3) {};
					\node[node, below right=6mm and 12mm of O] (N4) {};
					\node[hyperedge,right=10.45mm of N1] (E1) {$f^\ominus$};
					\node[node] (N5) {};
					\node[above right =-4mm and -2.4mm of O, label= left:{\scriptsize $(1)$}] {};
					\node[hyperedge,below=4mm of O] (E2) {$t^\ominus$};
					\node[hyperedge,left =10mm of N5] (E3) {$\beta$};
					\node[hyperedge,right =10mm of O] (E4) {$T_\rho^\oplus$};
					\node[hyperedge,right=18mm of O] (E5) {$q_2$};
					\draw[->] (N1) to[bend right=15] node[above] {$\kappa$} (N5);
					\draw[->] (N2) to[bend right=15] node[above] {$\kappa$} (N5);
					\draw[->] (N3) to[bend right=15] node[right] {$\kappa$} (N5);
					\draw[->] (N4) to[bend right=15] node[below] {$\kappa$} (N5);
					\draw[-] (N5) to[bend left=0] (E3);
					\draw[-] (N1) to[bend left=0] node[above] {\scriptsize 1} (E1);
					\draw[-] (E1) to[bend left=0] node[left] {\scriptsize 2} (N5);
					\draw[-] (N3) to[bend left=0] node[below] {\scriptsize 1} (E2);
					\draw[-] (N4) to[bend left=0] node[below right] {\scriptsize 2} (E2);
					\draw[-] (E2) to[bend left=0] node[right] {\scriptsize 3} (N5);
					\draw[-] (N2) to[bend left=0] node[right] {\scriptsize 1} (E4);
					\draw[-] (N3) to[bend right=55] node[below left] {\scriptsize 2} (E4);
					\draw[-] (N4) to[bend left=0] node[right] {\scriptsize 3} (E4);
					\draw[-] (E4) to[bend left=0] node[above right] {\scriptsize 4} (N5);
	}}}}
	\;
	\middle|
	\vcenter{\hbox{{\tikz[baseline=.1ex]{
	\node[hyperedge] (E21) {$U_\rho^\oplus$};
	\node[node,below=3mm of E21, label=left:{\scriptsize $(1)$}] (N24) {};
	\node[hyperedge,below=3mm of N24] (E22) {$\beta$};
	\draw[-] (E21) -- (N24);
	\draw[-] (N24) -- (E22);
	}}}}
	\;\;
	\vcenter{\hbox{{\tikz[baseline=.1ex]{\node[hyperedge] {$q_2$};}}}}
	\right)
	$$
\end{example}

The following lemma is crucial:

\begin{lemma}\label{lemma_main}
	$L(\mathrm{LDPO}(Gr)) = L(Gr)$ if $L(Gr)$ includes no edgeless hypergraphs.
\end{lemma}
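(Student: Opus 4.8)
The plan is to prove the two inclusions $L(Gr)\subseteq L(\mathrm{LDPO}(Gr))$ and $L(\mathrm{LDPO}(Gr))\subseteq L(Gr)$ separately, the common device being a faithful dictionary between a sequent $H\to A$ and the hypergraph $\mathrm{Rep}(H\to A)$ that encodes it with the labels $\cdot^\ominus,\cdot^\oplus$, the $\kappa$-connectors, and the $\beta$-bouquet sitting on a distinguished node $v_\beta$; note that $Ax^A=\mathrm{Rep}(A^\bullet\to A)$ and that each component of $\mathrm{Rep}(H\to A)$ carries exactly one $\beta$-edge, to which it is tied through $v_\beta$. Before either direction I record the rigid shape of every derivation: each rule of $\mathcal{P}$ requires a single control hyperedge $q_i^\bullet$, the transitions $(q_1^\bullet\mid q_2^\bullet)$ and $(q_2^\bullet\mid q_3^\bullet)$ are irreversible, and the finishing rule $(q_3^\bullet+F_L\mid F_R)$ deletes $q_3$; hence any derivation $Z\Rightarrow^\ast W$ with $W\in\mathcal{H}(\mathcal{T})$ splits uniquely into a stage-1 block of axiom rules $r^A_{Ax}$, the first transition, a stage-2 block of the four simulating rules $r^T_L,r^T_R,r^U_L,r^U_R$, the second transition, and a stage-3 block of terminal conversions and $\kappa$-removals finished by a single application of $(q_3^\bullet+F_L\mid F_R)$, which must come last.

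For completeness, take a derivation of $f_G(G)\to S$ in $\HL^\ast$, which exists since $G\in L(Gr)$, and replay it from its leaves to its root. Stage 1 produces one axiom hypergraph $Ax^A$ for each leaf $A^\bullet\to A$; stage 2 processes the tree bottom-up, mirroring each application of $(\div L),(\div R),(\times L),(\times R)$ by the matching DPO rule, whose left-hand side $Sep(\ldots)$ meets exactly the succedent and antecedent hyperedges of the premise representations and whose right-hand side installs the representation of the conclusion. At the end of stage 2 the hypergraph is $q_2^\bullet+\varphi^\bullet+\mathrm{Rep}(f_G(G)\to S)$; in stage 3 I replace every antecedent hyperedge $A^\ominus$ by a terminal handle $a^{\bullet_1}$ with $a\triangleright A$ chosen as $lab_G$, delete all $\kappa$-edges, and fire the finishing rule against $\varphi^\bullet$, $S^\oplus$ and the $\beta$-edge, which yields $G$. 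A routine count, using Lemma \ref{lemma_size_der} to bound the numbers of axioms, rule applications and nodes, shows the total number of DPO steps is linear in $|E_G|$ and fits the budget $C|E_G|$ with $C=2k+3$. Here the hypothesis that $L(Gr)$ has no edgeless hypergraph is what makes the two languages comparable at all: an edgeless $G$ would have budget $C\cdot 0=0$ and could never be produced, so it lies outside both sides.

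For soundness I carry the invariant that every hypergraph reachable in stage 2 has the form $q_2^\bullet+\varphi^\bullet+\sum_j\mathrm{Rep}(\Sigma_j)$ with $\HL^\ast\vdash\Sigma_j$ for each $j$, the components being exactly the connected pieces carrying $\beta$-edges. The base case is the state after stage 1, where each component is the axiom $\mathrm{Rep}(A_j^\bullet\to A_j)$. For the inductive step I verify, rule by rule, that a stage-2 DPO rule can fire only by matching the succedent and antecedent hyperedges of genuine premise representations, that it fuses their bouquets into one, that it outputs the representation of the conclusion of the corresponding $\HL^\ast$ rule, and that all other components stay intact; derivability of the new component is then the matching inference rule. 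When a terminal $W$ is reached I argue that exactly one component survives: among stage-3 rules only the finishing rule touches $\beta$-edges, removing exactly one while firing exactly once, and since the component count equals the number of $\beta$-edges it must equal one. This unique $\Sigma=(H\to S)$ has succedent $S$ (otherwise no $S^\oplus$ is present and the finishing rule cannot fire, so $q_3$ survives and $W$ is nonterminal), and each antecedent type $A$ admits some $a\triangleright A$ (otherwise the hyperedge $A^\ominus$ cannot be converted and survives in $W$). Reading the chosen terminals as $lab_G$ and the types as $f_G$ gives $\HL^\ast\vdash f_G(G)\to S$ with $lab_G(e)\triangleright f_G(e)$, i.e. $G\in L(Gr)$.

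The main obstacle is the inductive step of the soundness invariant, which is exactly the hazard flagged in Remark \ref{rem_problem_disconnected}: since all intermediate sequents float in one hypergraph, one must exclude DPO matches that splice together hyperedges of different sequents or collapse attachment nodes in ways the $\HL^\ast$ rules forbid. The proof that no such spurious match exists rests on the rigidity engineered into $\mathrm{Rep}$: every node is bound to its own $v_\beta$ by a $\kappa$-edge and every type-hyperedge carries a tentacle to the same $v_\beta$, so each $\mathrm{Rep}(\Sigma)$ is a connected unit with a uniquely determined bouquet, and a match of any $Sep(\ldots)$ left-hand side is forced to select whole hyperedges together with their bouquets rather than arbitrary fragments. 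Checking this for each of the four rule shapes, and confirming that the node identifications it induces coincide with those of the modelled $\HL^\ast$ rule, is the technical heart; by contrast the stage factorization, the single-$\beta$ survival argument, and the step count are routine.
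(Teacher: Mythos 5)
Your proposal is correct and follows essentially the same route as the paper's proof: the same stage decomposition driven by the $q_i$ control hyperedges, the same invariant that every stage-2 hypergraph is a disjoint union of sequent-hypergraph encodings of derivable sequents (with the $\beta$-edge count preventing cross-sequent matches and forcing a single surviving component), and the same step count via Lemma \ref{lemma_size_der} using $1\le|E_H|$ to fit the budget $C|E_H|$. Like the paper, you leave the rule-by-rule verification of the invariant as the acknowledged technical core rather than writing it out in full.
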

Its proof sketch can be found in Appendix \ref{app_proof_lemma_main}.

\subsection{Around Theorem \ref{th_main}}\label{ssec_rem_and_cor}
Let us start with the proof of Theorem \ref{th_main}:
\begin{proof}[of Theorem \ref{th_main}]
	The second statement directly follows from Theorem \ref{th_from_dpo_to_hlg} and Lemma \ref{lemma_main}: each lin-DPO grammar $Gr$ can be converted into an equivalent $\HL^\ast$-grammar $\mathrm{HLG}(Gr)$, and each $\HL^\ast$-grammar $Gr^\prime$ can be converted into an equivalent lin-DPO grammar $\mathrm{LDPO}(Gr^\prime)$. To prove the first statement note that if $Gr$ is edgeful, than $\mathrm{HLG}(Gr)$ can be considered as an $\HL$-grammar since its types belong to $\HL$ and any antecedent in any derivation of this grammar has hyperedges. Conversely, we would like to say that $\mathrm{LDPO}(Gr^\prime)$ is edgeful; however, the rule $(q_3^\bullet+F_L \mid F_R)$ violates this condition since $|E_{F_R}|=0$. To fix this, we replace this rule by the rules $(\varphi^\bullet + F^{a,A}_L + q_3^\bullet \mid F_R+a^\bullet)$ where:
	\begin{itemize}
		\item $a \in \mathcal{T}$ ia a symbol, $A \in \mathit{Tp}$ is a type and $a \triangleright A$;
		\item $V_{F^{a,A}_L} = \{v_i\}_{i=1}^{\rk(S)+\rk(a)}\sqcup\{v_\beta\}$; \quad $E_{F^{a,A}_L} = \{e_S,e_A,e_\beta\}$; \quad $att_{F^{a,A}_L}(e_\beta)=v_\beta$,
		\\
		$att_{F^{a,A}_L}(e_S)=v_{1}\dotsc v_{\rk(S)}v_\beta$, $att_{F^{a,A}_L}(e_A)=v_{\rk(S)+1}\dotsc v_{\rk(S)+\rk(A)}v_\beta$;
		\item $lab_{F^{a,A}_L}(e_S)=S^\oplus$, $lab_{F^{a,A}_L}(e_A)=A^\ominus$, $lab_{F^{a,A}_L}(e_\beta)=\beta$;
		\item $ext_{F^{a,A}_L} = v_1\dotsc v_{\rk(S)+\rk(A)}$.
	\end{itemize}
	This new rule combines the last rule of the original grammar with one of the rules of the form $(q_3^\bullet + (A^\ominus)^\bullet| q_3^\bullet + a^{\bullet_1})$. Thus $\mathrm{LDPO}(Gr^\prime)$ is edgeful. 
	\qed
\end{proof}

We can derive some nice corollaries from the constructions of grammars for free. E.g. consider an alternative definition of lin-DPO grammars where the linear restriction is the inner property of a grammar:
\begin{definition}
	A DPO grammar $HGr=\langle \mathcal{N},\mathcal{T},\mathcal{P}, Z\rangle$ satisfies the strong linear restriction property if there exists $c \in \mathbb{N}$ such that for all derivations $Z \Rightarrow^k H$ with $H \in \mathcal{H}(T)$ it holds that $k \le c|E_H|$.
\end{definition}

It follows from the proof of Lemma \ref{lemma_main} (see Appendix \ref{app_proof_lemma_main}) that, for each derivation $Z \Rightarrow^k H$ in $\mathrm{LDPO}(Gr)$, we have $k \le C|E_H|$, thus $\mathrm{LDPO}(Gr)$ satisfies the strong linear restriction property. Hence the following corollary holds:
\begin{corollary}\label{cor_slrp}
	If a lin-DPO grammar $Gr$ does not generate edgeless hypergraphs, then the equivalent grammar $\mathrm{LDPO}(\mathrm{HLG}(Gr))$ is edgeful and it satisfies the strong linear restriction property.
\end{corollary}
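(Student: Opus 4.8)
The plan is to derive the corollary purely by composing the two equivalence results already in hand and then verifying the two structural claims (edgefulness and the strong linear restriction) one at a time, using the no-edgeless hypothesis exactly where it is needed. So first I would pin down the equivalence. By Theorem~\ref{th_from_dpo_to_hlg} we have $L(\mathrm{HLG}(Gr)) = L(Gr)$, so $\mathrm{HLG}(Gr)$ is an $\HL^\ast$-grammar whose language contains no edgeless hypergraph. This is precisely the hypothesis of Lemma~\ref{lemma_main}, so setting $Gr' \eqdef \mathrm{HLG}(Gr)$ yields $L(\mathrm{LDPO}(Gr')) = L(Gr') = L(Gr)$; hence $\mathrm{LDPO}(\mathrm{HLG}(Gr))$ is equivalent to $Gr$.

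Next I would dispatch the strong linear restriction property, since it is the cheapest part. I would invoke the quantitative strengthening of Lemma~\ref{lemma_main} recorded above: its proof furnishes the uniform bound $\ell \le C\,|E_H|$ for every terminal derivation $Z \Rightarrow^\ell H$ in $\mathrm{LDPO}(Gr')$, where $C = 2k+3$ with $k = \max_{T \in Tp_{Gr'}}\|T\|$ depends only on $Gr'$ (cf.\ Lemma~\ref{lemma_size_der}). Applied to $Gr' = \mathrm{HLG}(Gr)$, this is exactly the assertion that $\mathrm{LDPO}(\mathrm{HLG}(Gr))$ satisfies the strong linear restriction property, with witnessing constant $C$.

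For edgefulness I would take the modified construction from the proof of Theorem~\ref{th_main}: replace the single finishing rule $(q_3^\bullet + F_L \mid F_R)$, whose right-hand side has $|E_{F_R}| = 0$, by the family $(\varphi^\bullet + F^{a,A}_L + q_3^\bullet \mid F_R + a^\bullet)$ indexed by the pairs with $a \triangleright A$. Each replacement rule now has a nonempty right-hand side; all other rules of $\mathrm{LDPO}$ already have nonempty sides and $|E_Z| > 0$; so the resulting grammar is edgeful. I would then have to check two things about this replacement: that it preserves the generated language, and that it preserves the linear bound.

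The hard part, and the only place where the no-edgeless hypothesis is genuinely consumed, is language-preservation of this final modification. The combined rule fuses the finishing step with one application of a rule $(q_3^\bullet + (A^\ominus)^\bullet \mid q_3^\bullet + a^{\bullet_1})$, so it can fire only when at least one antecedental edge $A^\ominus$ is still present at termination --- equivalently, only when the modelled succeeding sequent $H \to S$ has $|E_H| > 0$. I would argue that, since $L(Gr)$ and therefore $L(\mathrm{HLG}(Gr))$ contain no edgeless hypergraph, every successful derivation does reach such a sequent, so no terminal graph is lost and, by the converse direction, none is spuriously gained. Finally, because the modification merges two consecutive steps into one, it can only shorten derivations, so the bound $\ell \le C\,|E_H|$ from the previous paragraph survives unchanged (with the same or a smaller constant); hence the modified grammar remains within the strong linear restriction property, which completes the argument.
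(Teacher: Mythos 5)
Your proof is correct and follows essentially the same route as the paper: the equivalence is obtained by composing Theorem~\ref{th_from_dpo_to_hlg} with Lemma~\ref{lemma_main}, the strong linear restriction property is read off from the uniform derivation bound $k \le C|E_H|$ established in the proof of Lemma~\ref{lemma_main}, and edgefulness comes from the replacement of the finishing rule $(q_3^\bullet + F_L \mid F_R)$ by the fused rules in the proof of Theorem~\ref{th_main}. Your explicit check that this fusion preserves the language (which is where the no-edgeless hypothesis is consumed) and can only shorten derivations is exactly what the paper leaves implicit.
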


The definition of lin-DPO grammars is simpler than that of $\HL$-grammars, and from now on, thanks to Theorem \ref{th_main}, we can reason about the former if we want to study the expressive power of the latter. As an example, let us relate contextual hyperedge-replacement grammars defined in \cite{DrewesH15} with $\HL$-grammars. In \cite{DrewesH15}, a hypergraph is a tuple $G=\langle \dot{G},\bar{G},att_G,\dot{l}_G,\bar{l}_G\rangle$ where $\dot{G}$ are nodes, $\bar{G}$ are hyperedges, $att_G$ is the same as in our definition, $\dot{l}_G:\dot{G} \to \dot{\mathcal{C}}$ labels nodes, $\bar{l}_G:\bar{G} \to \bar{\mathcal{C}}$ labels hyperedges (see further notation in that article). From this hypergraph we can obtain a hypergraph in our sense if we replace node labels by additional hyperedges of rank 1 attached to these nodes: $t(G) \eqdef \langle \dot{G},\bar{G} \sqcup \dot{G},att^\prime_G,l_G, \Lambda\rangle$ where $att^\prime_G\vert_{\bar{G}}=att_G$, $att^\prime(v) = v$; $l_G\vert_{\dot{G}}=\dot{l}_G$, $l_G\vert_{\bar{G}}=\bar{l}_G$.
\begin{corollary}
	For each contextual grammar $CGr$ that does not generate the empty hypergraph there is an $\HL$-grammar generating $\{t(H) \mid H \in L(CGr)\}$.
\end{corollary}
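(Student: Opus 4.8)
The plan is to route the statement through Theorem~\ref{th_main}: since every edgeful lin-DPO grammar is equivalent to an $\HL$-grammar (indeed $\mathrm{HLG}(Gr)$ does the job by Theorem~\ref{th_from_dpo_to_hlg}), it suffices to exhibit an \emph{edgeful lin-DPO grammar} generating $\{t(H)\mid H\in L(CGr)\}$. First I would translate the given contextual grammar rule by rule. A contextual rule rewrites a single nonterminal hyperedge while reusing a number of already present, labelled \emph{context} nodes; this is exactly a rewriting step with a discrete interface, so it fits the paper's DPO format. Concretely, to a contextual rule with left-hand nonterminal handle (plus its context nodes) and right-hand side $R$ I would assign the DPO rule $(t(L)\mid t(R))$, whose external nodes are the attachment nodes of the nonterminal followed by the context nodes, and where $t$ turns every node label into a rank-$1$ hyperedge. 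The context-node labels become rank-$1$ hyperedges sitting in the body of $t(L)$ (hence matched in the host) and reproduced in $t(R)$; this is how context matching ``by label'' is simulated. Taking $t(Z)$ as the start hypergraph and declaring the terminal node and edge labels terminal, one obtains a DPO grammar $Gr'$ whose unrestricted derivations are in step-by-step bijection with the derivations of $CGr$, so that $L(Gr')=\{t(H)\mid H\in L(CGr)\}$.

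Next I would equip $Gr'$ with a linear constant. The crucial point is that under $t$ every node of $H$ also becomes a hyperedge, so $|E_{t(H)}|=|V_H|+|E_H|$, and both node creation and terminal-edge creation count towards the size of the target. Each DPO step here replaces exactly one nonterminal hyperedge by a right-hand side of bounded size. Passing to a standard normal form of $CGr$ with no erasing productions (legitimate precisely because $CGr$ does not generate the empty hypergraph, the only graph that could force such productions), every rule application creates at least one new node or terminal hyperedge, save possibly for chain (unit) steps; and in a shortest derivation the chain steps between two productive steps number at most a constant, since repeating a nonterminal with the same attachment pattern would yield a removable cycle. Counting leaves, branchings and productive steps of the derivation tree against the $|V_H|+|E_H|=|E_{t(H)}|$ items they create then bounds the length of a shortest derivation of $t(H)$ by $c\,|E_{t(H)}|$ for a constant $c$ depending only on $CGr$. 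Hence $\{t(H)\mid H\in L(CGr)\}=L(Gr'\restriction_c)$.

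It remains to check edgefulness, so that the first statement of Theorem~\ref{th_main} applies. The start hypergraph $t(Z)$ carries the start nonterminal, so $|E_{t(Z)}|>0$; each $t(L)$ contains the nonterminal hyperedge, so $|E_{t(L)}|>0$; and, after removing erasing productions, each $t(R)$ contains at least one node- or edge-hyperedge, so $|E_{t(R)}|>0$. Thus $Gr'\restriction_c$ is an edgeful lin-DPO grammar, and since $CGr$ avoids the empty hypergraph its image $\{t(H)\}$ contains no edgeless hypergraph, so the edgeful setting is the appropriate one. Applying $\mathrm{HLG}$ to $Gr'\restriction_c$ and invoking Theorem~\ref{th_from_dpo_to_hlg} together with the first statement of Theorem~\ref{th_main}, the grammar $\mathrm{HLG}(Gr'\restriction_c)$ is an $\HL$-grammar with $L(\mathrm{HLG}(Gr'\restriction_c))=L(Gr'\restriction_c)=\{t(H)\mid H\in L(CGr)\}$, as required.

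The main obstacle is the linear bound of the second step: one must ensure that \emph{some} derivation of each $t(H)$ has length linear in $|E_{t(H)}|$, which is exactly where the exclusion of the empty hypergraph and the removal of erasing (together with the control of chain) productions are genuinely needed; the remainder of the correspondence is essentially bookkeeping. A secondary technical point is to verify that the DPO encoding faithfully reproduces contextual matching — in particular, that the rank-$1$ label hyperedges placed on context nodes force exactly the label constraints of $CGr$, while the discrete interface of $(t(L)\mid t(R))$ conforms to the format required by the paper's DPO rules.
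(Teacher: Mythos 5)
Your proposal is correct and follows essentially the same route as the paper: translate $CGr$ rule by rule into a DPO grammar over the $t$-encoding (with all left-hand-side nodes made external), establish a linear bound on derivation lengths, check edgefulness, and invoke Theorem~\ref{th_main}. The only difference is that the paper obtains the normal form (elimination of empty and chain rules) and the linear derivation-length bound by citing Theorems 3.19 and 4.1 of \cite{DrewesH15}, whereas you sketch these arguments from scratch.
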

\begin{proof}[sketch]
	It is proved in \cite[Theorem 3.19]{DrewesH15} that one can eliminate empty and chain rules in any contextual grammar. Assume that $CGr = \langle \mathcal{C},\mathcal{R},Z\rangle$ satisfies this requirement (note that $|E_Z|>0$). We construct an equivalent DPO grammar $DGr=\langle X, \dot{\mathcal{C}}\sqcup \bar{\mathcal{C}},\mathcal{P},t(Z)\rangle$: for each $(L,R) \in \mathcal{R}$ let \\$L^\prime = \langle V_{t(L)},E_{t(L)},att_{t(L)},lab_{t(L)},v_1\dotsc v_l\rangle$ (here $V_{t(L)} = \{v_1,\dotsc,v_l\}$) and $R^\prime = \langle V_{t(R)},E_{t(R)},att_{t(R)},lab_{t(R)},v_1\dotsc v_l\rangle$; then we add the rule $(L^\prime \mid R^\prime)$ to $\mathcal{P}$. It is not hard to prove that each derivation $G \Rightarrow^\ast H$ in $CGr$ can be straightforwardly transformed into a derivation $t(G) \Rightarrow^\ast t(H)$ in $DGr$ and vice versa.
	
	In the proof of \cite[Theorem 4.1]{DrewesH15}, it is mentioned that the length of each derivation is linear in the size of the graph generated. Formally, if $Z \Rightarrow^k G$, then $k \le c(|\dot{G}|+|\bar{G}|) = c|E_{t(G)}|$ for some $c$. Thus for each derivation $t(Z) \Rightarrow^k t(G)$ it holds that $k \le c|E_{t(G)}|$, hence $DGr\restriction_c$ is an edgeful lin-DPO grammar equivalent to $CGr$. It remains to use Theorem \ref{th_main}.
	\qed
\end{proof}

\section{Conclusion}\label{sec_conclusion}
The equivalence between $\HL$-grammars and linearly restricted DPO grammars builds a nice bridge between the logic-based and the rule-based approach providing an independent grammar characterization of linear time bounded DPO grammars. The hypergraph Lambek calculus is expected to be of interest for logicians and linguists, but from the formal point of view lin-DPO grammars are much more attractive since they have simple definitions based on the well-studied DPO approach. 

The linear restriction for DPO grammars is of interest on its own since time-bounded graph grammars have not been studied yet, to our best knowledge. While still being NP-complete, lin-DPO grammars are more powerful than hyperedge replacement grammars. In the extended version of this paper we would like to prove that the class of languages generated by them is closed under intersection, which would complete our previous results presented in \cite{Pshenitsyn22_2}; this would also contrast to \cite[Theorem 6.2]{Book71}, which states that languages generated by linearly restricted context-sensitive grammars are not closed under intersection. Studying more relations of this grammar formalism with other ones is interesting; in particular, is it the case that lin-DPO grammars generate all languages from NP? and what about string languages generated by them?

Although $\HL$-grammars are more sophisticated, they give us some new perspectives on the theory of graph grammars. In particular, the idea of logic-based and type-based graph grammars can go beyond the formalism considered in this work. In the paper that I am going to submit to the workshop GCM 2023 I would like to show how to relate $\HL$ with the first-order intuitionistic linear logic and then to suggest a general notion of type-logical hypergraph grammars over an arbitrary first-order logic.

\appendixpage
\appendixtocoff
\appendix

\section{Proof of Lemma \ref{lemma_main}}\label{app_proof_lemma_main}

\begin{proof}
	Firstly, it is important to notice that the set of rules is finite: indeed, they involve only types from $ST_{Gr}$, and the latter set is finite. Now, let us analyze derivations in $L(\mathrm{LDPO}(Gr))$. Firstly, we note that, if $Z \Rightarrow^\ast G$, then either $G$ contains exactly one $q_i$-labeled hyperedge or $G$ is terminal. Besides, if $Z \Rightarrow G_1 \Rightarrow \dotsc \Rightarrow G_n$, and $G_j$ contains the hyperedge $q_{i(j)}^\bullet$, then $i(j)$ is non-decreasing. Formally, this is proved by induction on $n$; we just need to observe that each rule takes exactly one hyperedge of the form $q_i^\bullet$ and adds exactly one such hyperedge as well. As a consequence, any derivation can be divided into eight stages:
	\begin{enumerate}
		\item Applications of rules of the first group;
		\\
		Replacement of $q_1^\bullet$ by $q_2^\bullet$;
		\item Applications of rules of the second group;
		\\
		Replacement of $q_2^\bullet$ by $q_3^\bullet$;
		\item Applications of $\left(q_3^\bullet+\kappa^\bullet\middle|q_3^\bullet+\left(\vcenter{\hbox{{\tikz[baseline=.1ex]{
						\node[node,label=left:{\scriptsize $(1)$}] (N1) {};
						\node[node, right=3mm of N1,label=right:{\scriptsize $(2)$}] (N2) {};
		}}}}\right)\right)$
		and
		$(q_3^\bullet + (A^\ominus)^\bullet| q_3^\bullet + a^{\bullet_1})$
		;
		\item An application of the rule $(q_3^\bullet + F_L| F_R)$.
	\end{enumerate}
	Let us now consider each stage. It is straightforward to prove by induction that any hypergraph appearing at Stage 1 before the replacement of $q_1^\bullet$ is of the form $\varphi^\bullet+q_1^\bullet+\sum_{i=1}^{m_1} Ax^{A_i}$ for some types $A_i$ and some $m_1\in \mathbb{N}$; thus Stage 1 introduces axiom hypergraphs corresponding to axioms $A_i^\bullet \to A_i$. At the end of Stage 1 we have a hypergraph $\varphi^\bullet+q_2^\bullet+\sum_{i=1}^{M_1} Ax^{A_i}$. The number of steps at this stage equals $M_1+1$ where $M_1$ is the total number of axioms introduced.
	
	To analyze Stage 2, let us define a \emph{sequent hypergraph corresponding to a sequent $H \to A$} as follows:
	\begin{definition}
		If $H \to A$ is a sequent, then $\mathrm{SH}(H \to A)$ is a hypergraph $\langle V_H \sqcup \{v_\beta\},E_H \sqcup V_H \sqcup \{e_A,e_\beta\},att,lab,\Lambda\rangle$ where
		\begin{itemize}
			\item $att(e) = att_H(e)v_\beta$ and $lab(e) = lab_H(e)^\ominus$ for $e \in E_H$; 
			\item $att(v) = vv_\beta$ and $lab(v) = \kappa$ for $v \in V_H$; 
			\item $att(e_A)=ext_Hv_\beta$ and $lab(e_A) = A^\oplus$; \item $att(e_\beta)=v_\beta$ and $lab(e_\beta) = \beta$.
		\end{itemize}
	\end{definition}
	We claim that at Stage 2 (before the replacement of $q_2^\bullet$) any hypergraph appearing at some step of a derivation must be of the form $\varphi^\bullet+q_2^\bullet+\sum_{i=1}^{m_2} \mathrm{SH}(H_i \to B_i)$ for some $m_2 \in \mathbb{N}$ and for some \emph{derivable} sequents $H_i \to B_i$; let us call such hypergraphs \emph{derivational}. The proof is by induction on the number of steps of the derivation at Stage 2. To prove the base case, it suffices to notice that $\varphi^\bullet+q_2^\bullet+\sum_{i=1}^{M_1} Ax^{A_i}$ is derivational since $Ax^{A_i} = \mathrm{SH}(A_i^\bullet \to A_i)$.
	
	To prove the induction step, it suffices to explain why any application of a rule from the second group preserves the property of being derivational. For example, let us consider the case where $r^T_L$ for $T=N \div D$ (we use the same notation as in the definitions of the rules of $\mathrm{LDPO}(Gr)$). To apply this rule, we must select an $N^\ominus$-labeled hyperedge and hyperedges labeled by $lab_D(d_i)^\oplus$ within a derivational hypergraph $\varphi^\bullet+q_2^\bullet+\sum_{i=1}^{m_2} \mathrm{SH}(H_i \to B_i)$. Note that no two of these hyperedges can be selected from the same sequent hypergraph $\mathrm{SH}(H_i \to B_i)$ because this would imply that the latter has at least two $\beta$-labeled hyperedges, which is not the case. Thus, without loss of generality we can assume that we select the $lab_D(d_i)^\oplus$-labeled hyperedge in $\mathrm{SH}(H_i \to B_i)$ for $i=1,\dotsc,k$ and the $N^\ominus$-labeled hyperedge in $\mathrm{SH}(H_{k+1} \to B_{k+1})$. Consequently, $B_i = lab_D(d_i)$. Then, we remove all these hyperdges, we remove the $\kappa$-labeled edges adjacent to them as well as $\beta$-labeled hyperedges. Finally, we combine $\mathrm{SH}(H_{1} \to B_{1})$, $\dotsc$, $\mathrm{SH}(H_{k+1} \to B_{k+1})$ together according to the right-hand side of $r^T_L$. This results in a hypergraph $\mathrm{SH}(H^\prime \to B^\prime)$ where $H^\prime \to B^\prime$ is obtained from $H_{k+1} \to B_{k+1}$ and $H_1 \to B_1$, $\dotsc$, $H_k \to B_k$ by applying ${\scriptstyle(\div L)}$: we just need to check that the rule application glues the initial sequent hypergraphs correctly and returns all necessary $\kappa$-labeled and $\beta$-labeled hyperedges.
	
	The remaining rules can be analyzed in a similar way as well but, unfortunately, the real detailed proof would require several pages of formal descriptions of intermediate hypergraphs and it would be extremely tedious. A better way is to look at the examples, in particular, at Example \ref{ex_app_proof_lemma_main}.
	
	To sum up Stage 2, the resulting hypergraph must be of the form $\varphi^\bullet+q_3^\bullet+\sum_{i=1}^{M_2} \mathrm{SH}(H_i \to B_i)$. It is crucial to notice that our rules allow to remove a $\beta$-labeled hyperedge only one time, thus there is only one $\beta$-labeled hyperedge in this hypergraph, hence $M_2 = 1$. Therefore, at the beginning of Stage 3, we have $\varphi^\bullet+q_3^\bullet+\mathrm{SH}(H_1 \to B_1)$. At Stage 3, we remove all $\kappa$-labeled hyperedges and we replace each $A^\ominus$-labeled one by a terminal $a$-labeled hyperedge for $a \triangleright A$; note that we remove the additional attachment node and thus disconnect a hyperedge from the node with the attached $\beta$-labeled hyperedge. At the end of Stage 3, we have the hypergraph $\varphi^\bullet+q_3^\bullet+H_1^\prime$ where 
	\begin{itemize}
		\item $V_{H_1^\prime} = V_{H_1} \sqcup \{v_\beta\}$;
		\item $E_{H_1^\prime} = E_{H_1}\sqcup \{e_{B_1},e_\beta\}$;
		\item $att_{H_1^\prime}(e) = att_{H_1}(e)$ for $e \in E_{H_1}$, $att_{H_1^\prime}(e_{B_1}) = ext_{H_1}v_\beta$, $att(e_\beta) = v_\beta$;
		\item $lab_{H_1^\prime}(e) \triangleright lab_{H_1}(e)$ for $e \in E_{H_1}$, $lab_{H_1^\prime}(e_{B_1}) = B_1^\oplus$, $lab_{H_1^\prime}(e_\beta)=\beta$;
		\item $ext_{H_1^\prime} = \Lambda$.
	\end{itemize}
	At Stage 4, we fuse the $i$-th attachment node of the $\varphi$-labeled hyperedge with the $i$-th attachment node of $e_{B_1}$ and we also remove $e_{B_1}$ and $e_\beta$ along with $v_\beta$. Thus we obtain the hypergraph $f(H_1)$ where $f(e) = lab_{H_1^\prime}(e)$. Note that it must be the case that $B_1=S$ according to the rule $(q_3^\bullet+F_L \mid F_R)$. Concluding, we know that $H_1 \to B_1 = H_1 \to S$ is derivable and that the resulting hypergraph is $f(H_1)$ where $f(e) \triangleright lab_{H_1}(e)$ for all $e \in E_{H_1}$. Thus we have proved that $L(\mathrm{LDPO}(Gr))\subseteq L(Gr)$.
	
	The converse inclusion is proved using the same reasoning: we just notice that any derivation of $\HL^\ast$ can be modeled according to the four-stage scheme described above. We only need to keep track of the number of rules applied in a derivation. Let us do this: assume that $H \in L(Gr)$, that is, there is a relabeling $f_H$ such that $lab_H(e) \triangleright f_H(e)$ and $\HL^\ast \vdash f_H(H) \to S$. To derive $H$ in $\mathrm{LDPO}(Gr)$, we need to remodel the derivation of $f_H(H) \to S$ at Stages 1 and 2 thus obtaining the hypergraph $\varphi^\bullet+q_3^\bullet+\mathrm{SH}(f_H(H) \to S)$ and then make $H$ from the latter hypergraph. Let us count the number of rule applications:
	\begin{enumerate}
		\item At Stage 1, the number of rule applications equals $K_1+1$ where $K_1$ is the number of axioms in the derivation of $f_H(H) \to S$;
		\item At Stage 2, the number of rule applications equals $K_2+1$ where $K_2$ is the number of inference rules in the derivation of $f_H(H) \to S$;
		\item At Stage 3, the number of rule applications equals $|V_H|+|E_H|$ (because there is exactly one $\kappa$-labeled hyperedge attached to each node of $f_H(H)$ in $\mathrm{SH}(f_H(H) \to S)$ at the beginning of Stage 3);
		\item At Stage 4, one rule is applied.
	\end{enumerate}
	The total number of rule applications equals $K_1+K_2+|V_H|+|E_H|+3$. According to Lemma \ref{lemma_size_der}, $K_1+K_2+|V_H| \le k \cdot (|E_H|+1)$. Thus $K_1+K_2+|V_H|+|E_H|+3 \le k \cdot (|E_H|+1)+3 \le (2k+3)|E_H|=C|E_H|$. It is important here that $H$ has hyperedges, so we can use the inequality $1 \le |E_H|$.
	\qed
\end{proof}

\begin{example}\label{ex_app_proof_lemma_main}
	The derivation presented below represents Stage 1 and Stage 2: during the first three steps we introduce axiom hypergraphs, and during the next four steps we apply rules corresponding to the inference rules of $\HL$. This derivation corresponds to that from Example \ref{ex_from_LC_to_DPO} (the difference between them is that here we add $\kappa$-labeled and $\beta$-labeled hyperedges and additional attachment nodes). After the derivation, we also present the rules used in it that correspond to the inference rules of $\HL$.
	$$
	\vcenter{\hbox{{\tikz[baseline=.1ex]{
					\node[node,label=above:{\scriptsize $(1)$}] (N1) {};
					\node[node, below=6mm of N1,label=below:{\scriptsize $(2)$}] (N2) {};
					\node[hyperedge, below=5.5mm of N2] (E) {$q_1$};
					\draw[->] (N1) -- node[right] {$\varphi$} (N2);
	}}}}
	\;\;\Rightarrow\;\;
	\vcenter{\hbox{{\tikz[baseline=.1ex]{
					\node[node,label=above:{\scriptsize $(1)$}] (N1) {};
					\node[node, below=6mm of N1,label=below:{\scriptsize $(2)$}] (N2) {};
					\node[hyperedge, below=5.5mm of N2] (E) {$q_1$};
					\draw[->] (N1) -- node[right] {$\varphi$} (N2);
	}}}}
	\;
	\vcenter{\hbox{{\tikz[baseline=.1ex]{
					\node[] (O) {};
					\node[node,left=12mm of O] (N1) {};
					\node[node,right=12mm of O] (N2) {};
					\node[hyperedge,above right=0mm and 4mm of N1] (E1) {$p^\ominus$};
					\node[hyperedge,below left=0mm and 4mm of N2] (E2) {$p^\oplus$};
					\node[node,below=8mm of O] (N3) {};
					\node[hyperedge,below=3mm of N3] (E3) {$\beta$};
					\draw[-] (N1) to[out=60,in=180] node[above] {\scriptsize 1} (E1);
					\draw[-] (E1) to[out=0,in=150] node[above right] {\scriptsize 2} (N2);
					\draw[-] (N2) to[out=240,in=0] node[above] {\scriptsize 2} (E2);
					\draw[-] (E2) to[out=180,in=-30] node[below left] {\scriptsize 1} (N1);
					\draw[-] (E3) -- (N3);
					\draw[-] (E1) -- node[below] {\scriptsize 3} (N3);
					\draw[-] (E2) -- node[right] {\scriptsize 3} (N3);
					\draw[->] (N1) to[out=-90,in=180] node[below] {$\kappa$} (N3);
					\draw[->] (N2) to[out=-90,in=0] node[below] {$\kappa$} (N3);
	}}}}
	\;\;\Rightarrow\;\;
	$$
	$$
	\Rightarrow\;\;
	\vcenter{\hbox{{\tikz[baseline=.1ex]{
					\node[node,label=above:{\scriptsize $(1)$}] (N1) {};
					\node[node, below=6mm of N1,label=below:{\scriptsize $(2)$}] (N2) {};
					\node[hyperedge, below=5.5mm of N2] (E) {$q_1$};
					\draw[->] (N1) -- node[right] {$\varphi$} (N2);
	}}}}
	\;
	\vcenter{\hbox{{\tikz[baseline=.1ex]{
					\node[] (O) {};
					\node[node,left=12mm of O] (N1) {};
					\node[node,right=12mm of O] (N2) {};
					\node[hyperedge,above right=0mm and 4mm of N1] (E1) {$p^\ominus$};
					\node[hyperedge,below left=0mm and 4mm of N2] (E2) {$p^\oplus$};
					\node[node,below=8mm of O] (N3) {};
					\node[hyperedge,below=3mm of N3] (E3) {$\beta$};
					\draw[-] (N1) to[out=60,in=180] node[above] {\scriptsize 1} (E1);
					\draw[-] (E1) to[out=0,in=150] node[above right] {\scriptsize 2} (N2);
					\draw[-] (N2) to[out=240,in=0] node[above] {\scriptsize 2} (E2);
					\draw[-] (E2) to[out=180,in=-30] node[below left] {\scriptsize 1} (N1);
					\draw[-] (E3) -- (N3);
					\draw[-] (E1) -- node[below] {\scriptsize 3} (N3);
					\draw[-] (E2) -- node[right] {\scriptsize 3} (N3);
					\draw[->] (N1) to[out=-90,in=180] node[below] {$\kappa$} (N3);
					\draw[->] (N2) to[out=-90,in=0] node[below] {$\kappa$} (N3);
	}}}}
	\;\;
	\vcenter{\hbox{{\tikz[baseline=.1ex]{
					\node[] (O) {};
					\node[node,left=12mm of O] (N1) {};
					\node[node,right=12mm of O] (N2) {};
					\node[hyperedge,above right=0mm and 4mm of N1] (E1) {$r^\ominus$};
					\node[hyperedge,below left=0mm and 4mm of N2] (E2) {$r^\oplus$};
					\node[node,below=8mm of O] (N3) {};
					\node[hyperedge,below=3mm of N3] (E3) {$\beta$};
					\draw[-] (N1) to[out=60,in=180] node[above] {\scriptsize 1} (E1);
					\draw[-] (E1) to[out=0,in=150] node[above right] {\scriptsize 2} (N2);
					\draw[-] (N2) to[out=240,in=0] node[above] {\scriptsize 2} (E2);
					\draw[-] (E2) to[out=180,in=-30] node[below left] {\scriptsize 1} (N1);
					\draw[-] (E3) -- (N3);
					\draw[-] (E1) -- node[below] {\scriptsize 3} (N3);
					\draw[-] (E2) -- node[right] {\scriptsize 3} (N3);
					\draw[->] (N1) to[out=-90,in=180] node[below] {$\kappa$} (N3);
					\draw[->] (N2) to[out=-90,in=0] node[below] {$\kappa$} (N3);
	}}}}
	\;\;\Rightarrow\;\;
	$$
	$$
	\Rightarrow\;\;
	\vcenter{\hbox{{\tikz[baseline=.1ex]{
					\node[node,label=above:{\scriptsize $(1)$}] (N1) {};
					\node[node, below=6mm of N1,label=below:{\scriptsize $(2)$}] (N2) {};
					\node[blue,hyperedge, below=5.5mm of N2] (E) {$q_1$};
					\draw[->] (N1) -- node[right] {$\varphi$} (N2);
	}}}}
	\;
	\vcenter{\hbox{{\tikz[baseline=.1ex]{
					\node[] (O) {};
					\node[blue,node,left=12mm of O] (N1) {};
					\node[blue,node,right=12mm of O] (N2) {};
					\node[hyperedge,above right=0mm and 4mm of N1] (E1) {$p^\ominus$};
					\node[blue,hyperedge,below left=0mm and 4mm of N2] (E2) {$p^\oplus$};
					\node[blue,node,below=8mm of O] (N3) {};
					\node[blue,hyperedge,below=3mm of N3] (E3) {$\beta$};
					\draw[-] (N1) to[out=60,in=180] node[above] {\scriptsize 1} (E1);
					\draw[-] (E1) to[out=0,in=150] node[above right] {\scriptsize 2} (N2);
					\draw[blue,-] (N2) to[out=240,in=0] node[above] {\scriptsize 2} (E2);
					\draw[blue,-] (E2) to[out=180,in=-30] node[below left] {\scriptsize 1} (N1);
					\draw[blue,-] (E3) -- (N3);
					\draw[-] (E1) -- node[below] {\scriptsize 3} (N3);
					\draw[blue,-] (E2) -- node[right] {\scriptsize 3} (N3);
					\draw[blue,->] (N1) to[out=-90,in=180] node[below] {$\kappa$} (N3);
					\draw[blue,->] (N2) to[out=-90,in=0] node[below] {$\kappa$} (N3);
	}}}}
	\;\;
	\vcenter{\hbox{{\tikz[baseline=.1ex]{
					\node[] (O) {};
					\node[blue,node,left=12mm of O] (N1) {};
					\node[blue,node,right=12mm of O] (N2) {};
					\node[hyperedge,above right=0mm and 4mm of N1] (E1) {$r^\ominus$};
					\node[blue,hyperedge,below left=0mm and 4mm of N2] (E2) {$r^\oplus$};
					\node[blue,node,below=8mm of O] (N3) {};
					\node[blue,hyperedge,below=3mm of N3] (E3) {$\beta$};
					\draw[-] (N1) to[out=60,in=180] node[above] {\scriptsize 1} (E1);
					\draw[-] (E1) to[out=0,in=150] node[above right] {\scriptsize 2} (N2);
					\draw[blue,-] (N2) to[out=240,in=0] node[above] {\scriptsize 2} (E2);
					\draw[blue,-] (E2) to[out=180,in=-30] node[below left] {\scriptsize 1} (N1);
					\draw[blue,-] (E3) -- (N3);
					\draw[-] (E1) -- node[below] {\scriptsize 3} (N3);
					\draw[blue,-] (E2) -- node[right] {\scriptsize 3} (N3);
					\draw[blue,->] (N1) to[out=-90,in=180] node[below] {$\kappa$} (N3);
					\draw[blue,->] (N2) to[out=-90,in=0] node[below] {$\kappa$} (N3);
	}}}}
	\;\;
	\vcenter{\hbox{{\tikz[baseline=.1ex]{
					\node[] (O) {};
					\node[node,left=12mm of O] (N1) {};
					\node[node,right=12mm of O] (N2) {};
					\node[hyperedge,above right=0mm and 4mm of N1] (E1) {$q^\ominus$};
					\node[hyperedge,below left=0mm and 4mm of N2] (E2) {$q^\oplus$};
					\node[node,below=8mm of O] (N3) {};
					\node[hyperedge,below=3mm of N3] (E3) {$\beta$};
					\draw[-] (N1) to[out=60,in=180] node[above] {\scriptsize 1} (E1);
					\draw[-] (E1) to[out=0,in=150] node[above right] {\scriptsize 2} (N2);
					\draw[-] (N2) to[out=240,in=0] node[above] {\scriptsize 2} (E2);
					\draw[-] (E2) to[out=180,in=-30] node[below left] {\scriptsize 1} (N1);
					\draw[-] (E3) -- (N3);
					\draw[-] (E1) -- node[below] {\scriptsize 3} (N3);
					\draw[-] (E2) -- node[right] {\scriptsize 3} (N3);
					\draw[->] (N1) to[out=-90,in=180] node[below] {$\kappa$} (N3);
					\draw[->] (N2) to[out=-90,in=0] node[below] {$\kappa$} (N3);
	}}}}
	\;\;\Rightarrow^\ast\;\;
	$$
	$$
	\Rightarrow^\ast\;\;
	\vcenter{\hbox{{\tikz[baseline=.1ex]{
					\node[node,label=above:{\scriptsize $(1)$}] (N1) {};
					\node[node, below=6mm of N1,label=below:{\scriptsize $(2)$}] (N2) {};
					\node[red,hyperedge, below=5.5mm of N2] (E) {$q_2$};
					\draw[->] (N1) -- node[right] {$\varphi$} (N2);
	}}}}
	\;
	\vcenter{\hbox{{\tikz[baseline=.1ex]{
					\node[] (O) {};
					\node[red,node,left=27mm of O] (N1) {};
					\node[node,right=16mm of O] (N2) {};
					\node[red,node,above left=3mm and 9mm of O] (N4) {};
					\node[red,hyperedge,left= 8mm of N4] (E1) {$p^\ominus$};
					\node[hyperedge,right=5mm of N4] (E4) {$r^\ominus$};
					\node[hyperedge,below right =-1mm and -5mm of O] (E2) {$\tau(p\cdot r)^\oplus$};
					\node[red,node,below=8mm of O] (N3) {};
					\node[red,hyperedge,below=3mm of N3] (E3) {$\beta$};
					\draw[red,-] (N1) to[out=60,in=180] node[above] {\scriptsize 1} (E1);
					\draw[red,-] (E1) -- node[above] {\scriptsize 2} (N4);
					\draw[-] (N4) -- node[above] {\scriptsize 1} (E4);
					\draw[-] (E4) to[bend left=20] node[above right] {\scriptsize 2} (N2);
					\draw[-] (N2) to[out=210,in=0] node[below] {\scriptsize 2} (E2);
					\draw[-] (E2) to[out=180,in=-30] node[below left] {\scriptsize 1} (N1);
					\draw[red,-] (E3) -- (N3);
					\draw[red,-] (E1) to[bend right=30] node[below] {\scriptsize 3} (N3);
					\draw[-] (E2) -- node[right] {\scriptsize 3} (N3);
					\draw[-] (E4) 
					to[out=-150,in=90] (-7mm,-0.5mm) node[left] {\scriptsize 3}
					to[out=-90,in=120] (N3)
					;
					\draw[red,->] (N1) to[out=-90,in=180] node[below] {$\kappa$} (N3);
					\draw[->] (N2) to[out=-90,in=0] node[below] {$\kappa$} (N3);
					\draw[red,->] (N4) 
					to[out=-150,in=90] (-14mm,-0.5mm) node[left] {$\kappa$}
					to[out=-90,in=135] (N3)
					;
	}}}}
	\;\;
	\vcenter{\hbox{{\tikz[baseline=.1ex]{
					\node[] (O) {};
					\node[red,node,left=12mm of O] (N1) {};
					\node[red,node,right=12mm of O] (N2) {};
					\node[hyperedge,above right=0mm and 4mm of N1] (E1) {$q^\ominus$};
					\node[red,hyperedge,below left=0mm and 4mm of N2] (E2) {$q^\oplus$};
					\node[red,node,below=8mm of O] (N3) {};
					\node[red,hyperedge,below=3mm of N3] (E3) {$\beta$};
					\draw[-] (N1) to[out=60,in=180] node[above] {\scriptsize 1} (E1);
					\draw[-] (E1) to[out=0,in=150] node[above right] {\scriptsize 2} (N2);
					\draw[red,-] (N2) to[out=240,in=0] node[above] {\scriptsize 2} (E2);
					\draw[red,-] (E2) to[out=180,in=-30] node[below left] {\scriptsize 1} (N1);
					\draw[red,-] (E3) -- (N3);
					\draw[-] (E1) -- node[below] {\scriptsize 3} (N3);
					\draw[red,-] (E2) -- node[right] {\scriptsize 3} (N3);
					\draw[red,->] (N1) to[out=-90,in=180] node[below] {$\kappa$} (N3);
					\draw[red,->] (N2) to[out=-90,in=0] node[below] {$\kappa$} (N3);
	}}}}
	\;\;\Rightarrow\;\;
	$$
	$$
	\Rightarrow\;\;
	\vcenter{\hbox{{\tikz[baseline=.1ex]{
					\node[node,label=above:{\scriptsize $(1)$}] (N1) {};
					\node[node, below=6mm of N1,label=below:{\scriptsize $(2)$}] (N2) {};
					\node[violet,hyperedge, below=5.5mm of N2] (E) {$q_2$};
					\draw[->] (N1) -- node[right] {$\varphi$} (N2);
	}}}}
	\;
	\vcenter{\hbox{{\tikz[baseline=.1ex]{
					\node[] (O) {};
					\node[node,left=30mm of O] (N1) {};
					\node[violet,node,right=27mm of O] (N2) {};
					\node[violet,node,above=3mm of O] (N4) {};
					\node[hyperedge,left= 7mm of N4] (E1) {$\tau(p/q)^\ominus$};
					\node[violet,hyperedge,right=4mm of N4] (E4) {$q^\ominus$};
					\node[violet,node,right=6mm of E4] (N5) {};
					\node[violet,hyperedge,right=4mm of N5] (E5) {$r^\ominus$};
					\node[hyperedge,below right =-1mm and -27mm of O] (E2) {$\tau(p\cdot r)^\oplus$};
					\node[violet,node,below=8mm of O] (N3) {};
					\node[violet,hyperedge,below=3mm of N3] (E3) {$\beta$};
					\draw[-] (N1) to[bend left=20] node[above] {\scriptsize 1} (E1);
					\draw[-] (E1) -- node[above] {\scriptsize 2} (N4);
					\draw[violet,-] (N4) -- node[above] {\scriptsize 1} (E4);
					\draw[violet,-] (E4) -- node[above] {\scriptsize 2} (N5);
					\draw[violet,-] (N5) -- node[above] {\scriptsize 1} (E5);
					\draw[-] (N4) -- node[above] {\scriptsize 2} (E1);
					\draw[violet,-] (E5) to[bend left=20] node[above right] {\scriptsize 2} (N2);
					\draw[-] (N2) to[bend left=10] node[above] {\scriptsize 2} (E2);
					\draw[-] (E2) to[bend left=10] node[above] {\scriptsize 1} (N1);
					\draw[violet,-] (E3) -- (N3);
					\draw[-] (E1) to[bend right=20] node[below] {\scriptsize 3} (N3);
					\draw[-] (E2) to[bend right=22] node[below left] {\scriptsize 3} (N3);
					\draw[violet,-] (E4) to[bend left=10] node[above left] {\scriptsize 3} (N3);
					\draw[violet,-] (E5) to[bend left=35] node[right] {\scriptsize 3} (N3);
					\draw[->] (N1) to[out=-90,in=210] node[below] {$\kappa$} (N3);
					\draw[violet,->] (N2) to[out=-90,in=-30] node[below] {$\kappa$} (N3);
					\draw[violet,->] (N5) to[bend left=25] node[right] {$\kappa$} (N3) ;
					\draw[violet,->] (N4) to[bend right=8] node[left] {$\kappa$} (N3);
	}}}}
	\;\;\Rightarrow\;\;
	$$
	$$
	\Rightarrow\;\;
	\vcenter{\hbox{{\tikz[baseline=.1ex]{
					\node[node,label=above:{\scriptsize $(1)$}] (N1) {};
					\node[node, below=6mm of N1,label=below:{\scriptsize $(2)$}] (N2) {};
					\node[purple,hyperedge, below=5.5mm of N2] (E) {$q_2$};
					\draw[->] (N1) -- node[right] {$\varphi$} (N2);
	}}}}
	\;
	\vcenter{\hbox{{\tikz[baseline=.1ex]{
					\node[] (O) {};
					\node[purple,node,left=19mm of O] (N1) {};
					\node[purple,node,right=19mm of O] (N2) {};
					\node[purple,node,above=3mm of O] (N4) {};
					\node[hyperedge,left= 4mm of N4] (E1) {$\tau(p/q)^\ominus$};
					\node[purple,hyperedge,right=4mm of N4] (E4) {$\tau(q\cdot r)^\ominus$};
					\node[purple,hyperedge,below right =-1mm and -5mm of O] (E2) {$\tau(p\cdot r)^\oplus$};
					\node[purple,node,below=8mm of O] (N3) {};
					\node[purple,hyperedge,below=3mm of N3] (E3) {$\beta$};
					\draw[-] (N1) to[bend left=10] node[above] {\scriptsize 1} (E1);
					\draw[-] (E1) -- node[above] {\scriptsize 2} (N4);
					\draw[purple,-] (N4) -- node[above] {\scriptsize 1} (E4);
					\draw[purple,-] (E4) to[bend left=10] node[above right] {\scriptsize 2} (N2);
					\draw[purple,-] (N2) to[bend left=10] node[below] {\scriptsize 2} (E2);
					\draw[purple,-] (E2) to[bend left=10] node[below left] {\scriptsize 1} (N1);
					\draw[purple,-] (E3) -- (N3);
					\draw[-] (E1) to[bend right=30] node[below] {\scriptsize 3} (N3);
					\draw[purple,-] (E2) -- node[right] {\scriptsize 3} (N3);
					\draw[purple,->] (N1) to[out=-90,in=180] node[below] {$\kappa$} (N3);
					\draw[purple,->] (N2) to[out=-90,in=0] node[below] {$\kappa$} (N3);
					\draw[purple,->] (N4) 
					to[out=-150,in=80] (-5mm,-0.5mm) node[left] {$\kappa$}
					to[out=-100,in=135] (N3)
					;
	}}}}
	\Rightarrow
	\vcenter{\hbox{{\tikz[baseline=.1ex]{
					\node[node,label=above:{\scriptsize $(1)$}] (N1) {};
					\node[node, below=6mm of N1,label=below:{\scriptsize $(2)$}] (N2) {};
					\node[hyperedge, below=5.5mm of N2] (E) {$q_2$};
					\draw[->] (N1) -- node[right] {$\varphi$} (N2);
	}}}}
	\;
	\vcenter{\hbox{{\tikz[baseline=.1ex]{
					\node[] (O) {};
					\node[node,left=20mm of O] (N1) {};
					\node[node,right=15mm of O] (N2) {};
					\node[hyperedge,above right=1mm and 2.5mm of N1] (E1) {$\tau(p/q)^\ominus$};
					\node[hyperedge,below left=-1mm and 3.5mm of N2] (E2) {$\tau((p\cdot r)/(q\cdot r))^\oplus$};
					\node[node,below=6mm of E2] (N3) {};
					\node[hyperedge,below=3mm of N3] (E3) {$\beta$};
					\draw[-] (N1) to[bend left=10] node[above] {\scriptsize 1} (E1);
					\draw[-] (E1) to[bend left=10] node[above right] {\scriptsize 2} (N2);
					\draw[-] (N2) to[bend left=10] node[below] {\scriptsize 2} (E2);
					\draw[-] (E2) to[bend left=10] node[below left] {\scriptsize 1} (N1);
					\draw[-] (E3) -- (N3);
					\draw[-] (E1) to[out=-120,in=180] node[right] {\scriptsize 3} (N3);
					\draw[-] (E2) -- node[right] {\scriptsize 3} (N3);
					\draw[->] (N1) to[out=-90,in=180] node[below] {$\kappa$} (N3);
					\draw[->] (N2) to[out=-90,in=0] node[below] {$\kappa$} (N3);
	}}}}
	$$
	
	The rules used in this derivation are as follows:
	
	$$
	r^{\tau(p\cdot r)}_R = \left(
	{\color{blue}
	\vcenter{\hbox{{\tikz[baseline=.1ex]{
					\node[hyperedge] (E1) {$p^\oplus$};
					\node[node,left=6mm of E1, label=above:{\scriptsize $(1)$}] (N2) {};
					\node[node,right=6mm of E1, label=above:{\scriptsize $(2)$}] (N3) {};
					\node[node,below=6mm of E1, label=right:{\scriptsize $(3)$}] (N4) {};
					\node[hyperedge,below =3mm of N4] (E2) {$\beta$};
					\draw[-] (E1) -- node[above] {\scriptsize 1} (N2);
					\draw[-] (E1) -- node[above] {\scriptsize 2} (N3);
					\draw[-] (E1) -- node[right] {\scriptsize 3} (N4);
					\draw[-] (N4) -- (E2);
					\draw[->] (N2) -- node[left] {$\kappa$} (N4);
					\draw[->] (N3) -- node[right] {$\kappa$} (N4);
	}}}}
	\vcenter{\hbox{{\tikz[baseline=.1ex]{
					\node[hyperedge] (E1) {$r^\oplus$};
					\node[node,left=6mm of E1, label=above:{\scriptsize $(4)$}] (N2) {};
					\node[node,right=6mm of E1, label=above:{\scriptsize $(5)$}] (N3) {};
					\node[node,below=6mm of E1, label=right:{\scriptsize $(6)$}] (N4) {};
					\node[hyperedge,below =3mm of N4] (E2) {$\beta$};
					\draw[-] (E1) -- node[above] {\scriptsize 1} (N2);
					\draw[-] (E1) -- node[above] {\scriptsize 2} (N3);
					\draw[-] (E1) -- node[right] {\scriptsize 3} (N4);
					\draw[-] (N4) -- (E2);
					\draw[->] (N2) -- node[left] {$\kappa$} (N4);
					\draw[->] (N3) -- node[right] {$\kappa$} (N4);
	}}}}
	\vcenter{\hbox{{\tikz[baseline=.1ex]{\node[hyperedge] {$q_2$};}}}}
	}
	\;
	\middle|
	\vcenter{\hbox{{\tikz[baseline=.1ex]{
					\node[hyperedge] (E1) {$\tau(p\cdot r)^\oplus$};
					\node[node,above=3.5mm of E1, label=above:{\scriptsize $(2)\,(4)$}] (N1) {};
					\node[node,left=6mm of E1, label=above:{\scriptsize $(1)$}] (N2) {};
					\node[node,right=6mm of E1, label=above:{\scriptsize $(5)$}] (N3) {};
					\node[node,below=5mm of E1, label=below right:{\scriptsize $(6)$}, label=below left:{\scriptsize $(3)$}] (N4) {};
					\node[hyperedge,below =3.5mm of N4] (E2) {$\beta$};
					\draw[-] (E1) -- node[above] {\scriptsize 1} (N2);
					\draw[-] (E1) -- node[above] {\scriptsize 2} (N3);
					\draw[-] (E1) -- node[right] {\scriptsize 3} (N4);
					\draw[-] (N4) -- (E2);
					\draw[->] (N2) -- node[left] {$\kappa$} (N4);
					\draw[->] (N3) -- node[right] {$\kappa$} (N4);
					\draw[->] (N1) 
					to[out=180,in=90] (-17mm,0mm) node[left] {$\kappa$}
					to[out=270,in=180] (N4)
					;
	}}}}
	\vcenter{\hbox{{\tikz[baseline=.1ex]{\node[hyperedge] {$q_2$};}}}}
	\right)
	$$
	
	$$
	r^{\tau(p/q)}_L = 
	\left(
	{\color{red}
		\vcenter{\hbox{{\tikz[baseline=.1ex]{
					\node[hyperedge] (E1) {$p^\ominus$};
					\node[node,left=6mm of E1, label=above:{\scriptsize $(1)$}] (N2) {};
					\node[node,right=6mm of E1, label=above:{\scriptsize $(2)$}] (N3) {};
					\node[node,below=6mm of E1, label=right:{\scriptsize $(3)$}] (N4) {};
					\node[hyperedge,below =3mm of N4] (E2) {$\beta$};
					\draw[-] (E1) -- node[above] {\scriptsize 1} (N2);
					\draw[-] (E1) -- node[above] {\scriptsize 2} (N3);
					\draw[-] (E1) -- node[right] {\scriptsize 3} (N4);
					\draw[-] (N4) -- (E2);
					\draw[->] (N2) -- node[left] {$\kappa$} (N4);
					\draw[->] (N3) -- node[right] {$\kappa$} (N4);
	}}}}
	\vcenter{\hbox{{\tikz[baseline=.1ex]{
					\node[hyperedge] (E1) {$q^\oplus$};
					\node[node,left=6mm of E1, label=above:{\scriptsize $(4)$}] (N2) {};
					\node[node,right=6mm of E1, label=above:{\scriptsize $(5)$}] (N3) {};
					\node[node,below=6mm of E1, label=right:{\scriptsize $(6)$}] (N4) {};
					\node[hyperedge,below =3mm of N4] (E2) {$\beta$};
					\draw[-] (E1) -- node[above] {\scriptsize 1} (N2);
					\draw[-] (E1) -- node[above] {\scriptsize 2} (N3);
					\draw[-] (E1) -- node[right] {\scriptsize 3} (N4);
					\draw[-] (N4) -- (E2);
					\draw[->] (N2) -- node[left] {$\kappa$} (N4);
					\draw[->] (N3) -- node[right] {$\kappa$} (N4);
	}}}}
	\vcenter{\hbox{{\tikz[baseline=.1ex]{\node[hyperedge] {$q_2$};}}}}
	}
	\;
	\middle|
	\vcenter{\hbox{{\tikz[baseline=.1ex]{
					\node[hyperedge] (E1) {$\tau(p/q)^\ominus$};
					\node[node,right=14mm of E1, label=above:{\scriptsize $(2)\,(5)$}] (N1) {};
					\node[node,left=6mm of E1, label=above:{\scriptsize $(1)$}] (N2) {};
					\node[node,right=6mm of E1, label=above:{\scriptsize $(4)$}] (N3) {};
					\node[node,below=6mm of E1, label=below right:{\scriptsize $(6)$}, label=below left:{\scriptsize $(3)$}] (N4) {};
					\node[hyperedge,below =4mm of N4] (E2) {$\beta$};
					\draw[-] (E1) -- node[above] {\scriptsize 1} (N2);
					\draw[-] (E1) -- node[above] {\scriptsize 2} (N3);
					\draw[-] (E1) -- node[right] {\scriptsize 3} (N4);
					\draw[-] (N4) -- (E2);
					\draw[->] (N2) -- node[left] {$\kappa$} (N4);
					\draw[->] (N3) -- node[right] {$\kappa$} (N4);
					\draw[->] (N1) 
					to[bend left=25] node[below] {$\kappa$} (N4)
					;
	}}}}
	\!\!\!\!
	\vcenter{\hbox{{\tikz[baseline=.1ex]{\node[hyperedge] {$q_2$};}}}}
	\right)
	$$
	
	$$
	r^{\tau(q\cdot r)}_L = 
	\left(
	{\color{violet}
		\vcenter{\hbox{{\tikz[baseline=.1ex]{
					\node[node] (N0) {};
					\node[hyperedge,left=5mm of N0] (E1) {$q^\ominus$};
					\node[node,left=5mm of E1, label=above:{\scriptsize $(1)$}] (N1) {};
					\node[hyperedge,right=5mm of N0] (E2) {$r^\ominus$};
					\node[node,right=5mm of E2, label=above:{\scriptsize $(2)$}] (N2) {};
					\node[node,below=9mm of N0, label=below right:{\scriptsize $(3)$}] (N3) {};
					\node[hyperedge,below =4mm of N3] (E3) {$\beta$};
					\draw[-] (N1) -- node[above] {\scriptsize 1} (E1);
					\draw[-] (E1) -- node[above] {\scriptsize 2} (N0);
					\draw[-] (N0) -- node[above] {\scriptsize 1} (E2);
					\draw[-] (E2) -- node[above] {\scriptsize 2} (N2);
					\draw[-] (E1) -- node[left] {\scriptsize 3} (N3);
					\draw[-] (E2) -- node[right] {\scriptsize 3} (N3);
					\draw[-] (E3) -- (N3);
					\draw[->] (N0) -- node[above left] {$\kappa$} (N3);
					\draw[->] (N1) to[bend right=20] node[left] {$\kappa$} (N3);
					\draw[->] (N2) to[bend left=20] node[right] {$\kappa$} (N3);
	}}}}
	\vcenter{\hbox{{\tikz[baseline=.1ex]{\node[hyperedge] {$q_2$};}}}}
	}
	\;
	\middle|
	\vcenter{\hbox{{\tikz[baseline=.1ex]{
					\node[hyperedge] (E1) {$\tau(q\cdot r)^\ominus$};
					\node[node,left=6mm of E1, label=above:{\scriptsize $(1)$}] (N2) {};
					\node[node,right=6mm of E1, label=above:{\scriptsize $(2)$}] (N3) {};
					\node[node,below=6mm of E1, label=below right:{\scriptsize $(3)$}] (N4) {};
					\node[hyperedge,below =4mm of N4] (E2) {$\beta$};
					\draw[-] (E1) -- node[above] {\scriptsize 1} (N2);
					\draw[-] (E1) -- node[above] {\scriptsize 2} (N3);
					\draw[-] (E1) -- node[right] {\scriptsize 3} (N4);
					\draw[-] (N4) -- (E2);
					\draw[->] (N2) to[bend right=20] node[left] {$\kappa$} (N4);
					\draw[->] (N3) to[bend left=20] node[right] {$\kappa$} (N4);
	}}}}
	\vcenter{\hbox{{\tikz[baseline=.1ex]{\node[hyperedge] {$q_2$};}}}}
	\right)
	$$
	\begin{multline*}
	r^{\tau(p\cdot r)/(q\cdot r)}_R = 
	\\
	\left(
	{\color{purple}
		\vcenter{\hbox{{\tikz[baseline=.1ex]{
					\node[node] (N0) {};
					\node[hyperedge,left=4mm of N0] (E1) {$\tau(p\cdot r)^\oplus$};
					\node[node,left=4mm of E1, label=above:{\scriptsize $(1)$}] (N1) {};
					\node[hyperedge,right=4mm of N0] (E2) {$\tau(q\cdot r)^\ominus$};
					\node[node,right=4mm of E2, label=above:{\scriptsize $(2)$}] (N2) {};
					\node[node,below=7mm of N0, label=below right:{\scriptsize $(3)$}] (N3) {};
					\node[hyperedge,below =4mm of N3] (E3) {$\beta$};
					\draw[-] (N1) -- node[above] {\scriptsize 1} (E1);
					\draw[-] (E1) -- node[above] {\scriptsize 2} (N0);
					\draw[-] (N0) -- node[above] {\scriptsize 2} (E2);
					\draw[-] (E2) -- node[above] {\scriptsize 1} (N2);
					\draw[-] (E1) -- node[left] {\scriptsize 3} (N3);
					\draw[-] (E2) -- node[right] {\scriptsize 3} (N3);
					\draw[-] (E3) -- (N3);
					\draw[->] (N0) -- node[above left] {$\kappa$} (N3);
					\draw[->] (N1) to[bend right=20] node[below] {$\kappa$} (N3);
					\draw[->] (N2) to[bend left=20] node[below] {$\kappa$} (N3);
	}}}}
	\!\!
	\vcenter{\hbox{{\tikz[baseline=.1ex]{\node[hyperedge] {$q_2$};}}}}
	}
	\;
	\middle|
	\vcenter{\hbox{{\tikz[baseline=.1ex]{
					\node[hyperedge] (E1) {$\tau((p \cdot r)/(q\cdot r))^\ominus$};
					\node[node,left=6mm of E1, label=above:{\scriptsize $(1)$}] (N2) {};
					\node[node,right=6mm of E1, label=above:{\scriptsize $(2)$}] (N3) {};
					\node[node,below=4mm of E1, label=below left:{\scriptsize $(3)$}] (N4) {};
					\node[hyperedge,below =4mm of N4] (E2) {$\beta$};
					\draw[-] (E1) -- node[above] {\scriptsize 1} (N2);
					\draw[-] (E1) -- node[above] {\scriptsize 2} (N3);
					\draw[-] (E1) -- node[right] {\scriptsize 3} (N4);
					\draw[-] (N4) -- (E2);
					\draw[->] (N2) to[bend right=20] node[below] {$\kappa$} (N4);
					\draw[->] (N3) to[bend left=20] node[below] {$\kappa$} (N4);
	}}}}
	\!\!
	\vcenter{\hbox{{\tikz[baseline=.1ex]{\node[hyperedge] {$q_2$};}}}}
	\right)
	\end{multline*}
\end{example}

\end{document}